\def\Z2{\mathbb Z_2}
\def\b|{\big \|}
\def\dist{\textrm{dist}}
\def\supp{\textrm{supp}}
\def\vlr{v_{LR}}
\def\dim{\textrm{dim}}
\theoremstyle{definition}
\newtheorem{Lemma}{Lemma}
\newtheorem{Theorem}[Lemma]{Theorem}
\newtheorem{Assumption}[Lemma]{Assumption}
\begin{document}
\title{The dynamical $\alpha$-Rényi entropies of local Hamiltonians grow at most linearly in time}

\author{Daniele Toniolo\,\orcidlink{0000-0003-2517-0770}}
\email[]{d.toniolo@ucl.ac.uk}
\email[]{danielet@alumni.ntnu.no}
\affiliation{Department of Physics and Astronomy, University College London, Gower Street, London WC1E 6BT, United
Kingdom}

\author{Sougato Bose\,\orcidlink{0000-0001-8726-0566}}
\affiliation{Department of Physics and Astronomy, University College London, Gower Street, London WC1E 6BT, United
Kingdom}


\begin{abstract}
We consider a generic one dimensional spin system of length $ L $, arbitrarily large, with strictly local interactions, for example nearest neighbor, and prove that the dynamical $ \alpha $-Rényi entropies, $ 0 < \alpha \le 1 $, of an initial product state grow at most linearly in time. This result arises from a general relation among dynamical $ \alpha $-Rényi entropies and Lieb-Robinson bounds. We extend our bound on the dynamical generation of entropy to systems with exponential decay of interactions, for values of $\alpha$ close enough to $ 1 $, and moreover to initial pure states with low entanglement, of order $ \log L $, that are typically represented by critical states. We establish that low entanglement states have an efficient MPS representation that persists at least up to times of order $ \log L $. 
The main technical tools are the Lieb-Robinson bounds, to locally approximate the dynamics of the spin chain, a strict upper bound of Audenaert on $ \alpha $-Rényi entropies and a bound on their concavity. Such a bound, that we provide in an appendix, can be of independent interest.
\end{abstract}

\maketitle



\section{Introduction}

$ \alpha $-Rényi entropies are generalized quantum entropies that are good quantifiers of entanglement \cite{Bennett_1996,Vedral_1998,Hayden_2006,Peschel_2011}. A list of their properties can be found in \cite{Vershynina_Scholarpedia}. Early works on the characterization of spin systems' ground state with Rényi entropies are \cite{Jin_2004,Franchini_2007}.

Experimental breakthroughs \cite{Jurcevic_2014,Islam_2015,Lukin_2019} have shown how quantum entropies can be detected and measured making them not only crucial quantities to characterize the behavior of many body systems from a theoretical point of view but also susceptible of experimental tests.

The scaling of $ \alpha $-Rényi entropies, $0 < \alpha < 1 $, of the reduced density matrix of a state, of a one-dimensional system, have been shown to determine whether such a state could be efficiently represented by a matrix product state (MPS) \cite{Verstraete_Cirac_2006, Schuch_2008}. In higher dimensional systems, instead, a non volume law for the Rényi entropy does not imply an efficient MPS representation \cite{Ge_Eisert_2016}.

The variation of the von Neumann entropy of the reduced density of a state evolved in time according to the unitary evolution of a system has attracted a lot of attention starting from the SIE (small incremental entangling) and SIM (small incremental mixing) conjectures attributed by Bravyi in \cite{Bravyi_2007} to Kitaev. See \cite{Bravyi_2007} also for early references on this topic. A recent review covering these themes is \cite{Gong_2022}. These conjectures roughly say that the rate of increase in time of the von Neumann entropy and the entropy of mixing are upper bounded by a constant independent from the overall system's size. Audenaert \cite{Audenaert_2014}, Van Acoleyen {\it et al.} \cite{Van_Acoleyen_2013} and Mariën {\it et al.} \cite{Marien_2016} have given proofs of such conjectures. In particular as a consequence of the (at most) linear growth in time of the entanglement entropy the authors of \cite{Van_Acoleyen_2013, Marien_2016} were able to prove, using the formalism of the quasi-adiabatic continuation \cite{Hastings_LSM_2004, Osborne_2007, Hastings_Locality_2010, Bachmann_Auto_2012, Nach_2019}, that states within the same phase of matter have the same scaling of the entanglement entropy, implying the stability of the area law in dimension larger than one. A phase of matter in this context is identified by the set of gapped eigenstates of local Hamiltonians that can be smoothly connected. Recently a concept of quantum phase has emerged in the more general context of Lindbladian evolution \cite{Coser_2019, Onorati_2023}. It is well known that the area law for one-dimensional gapped state has already been proven directly \cite{Hastings_2007,Arad_2012}. Recently proofs have also been given for two-dimensional frustration-free systems \cite{Anshu_2022_1,Anshu_2022_2}. In general a proof for higher dimensions is still lacking, nevertheless under additional mild conditions Masanes has proven a low entropy law in \cite{Masanes_2009}. Area laws have also been shown to follow from exponential decay of corrections \cite{Brandao_2013,Brandao_2014}. A rigorous analysis of entanglement rates for $\alpha$-Rényi entropy with $ \alpha > 1$ has been given in \cite{Shi_2022}, moreover for these entropies the authors of \cite{Bertini_2019} have shown the generation of entanglement, from a specific class of initial product states for the dynamics of the "self-dual" kicked Ising chains, exactly.

Calabrese and Cardy pioneered the use of field theoretic methods for the evaluation of entanglement entropies and their dynamics in spin systems  \cite{Calabrese_Cardy_2004,Calabrese_Cardy_2005,Calabrese_Cardy_2009}, providing in \cite{Calabrese_Cardy_2005} also the solution for the transverse Ising model. Other approaches in the context of integrable systems are \cite{Fagotti_2008} and, for $\alpha$-Rényi entropy with $ \alpha > 1$, \cite{Bertini_2022}.

Symmetries can affect the dynamical behavior of entropies, the authors of \cite{Rakovszky_2019,Huang_2020,Znidaric_2023} have pointed out sub-ballistic growth for diffusive systems.

In our work we are able to upper bound, as a function of time and $ \alpha $, the variation of the $\alpha$-Rényi entropy, with $0 < \alpha \le 1$,  of the partial trace of the time evolution of a generic product state, when the time evolution is induced by a nearest neighbor Hamiltonian $ H = \sum_{j=-L}^{L-1} H_{j,j+1} $ in one dimension. This corresponds to a global quench where the ground state of the initial Hamiltonian is a product state. With $r$ the local Hilbert space dimension, the bound reads:
\begin{equation} \label{entropy_bound}
 \Delta S_{\alpha}(t) \le K \, r^{\frac{1}{\alpha}-1} \, \max_j\{\|H_{j,j+1}\|\} \, t \, \log r + K'
\end{equation}
with both $ K $ and $ K' $ constants of order $ 1 $.
This is the first rigorous result, as far as we know, on the dynamical evolution of $\alpha$-Rényi entropy, with $0<\alpha < 1$. The upper bound on the von Neumann entropy follows as the  $\alpha \rightarrow 1$ limit.

The upper bound \eqref{entropy_bound} is obtained from the following general relation among $\alpha$-Rényi entropies, with $0<\alpha \le 1$ and Lieb-Robinson bounds. This is the most important result of our paper together with \eqref{entropy_bound} that is an application of theorem \ref{theorem_entropy_LR}. A further application of it has been obtained for the case of local Hamiltonian systems with slow dynamics (many-body localized for example), as quantified by a Lieb-Robinson bound giving rise to a logarithmic lightcone, in our work \cite{Toniolo_2024_3}, where we have proven a $ \log t $ behavior of $ \Delta S_\alpha (t) $. 

\begin{Theorem} (Entanglement generation, informal). \label{theorem_entropy_LR} 
For a one-dimensional lattice spin system, with local Hilbert space dimension equal to $ r $, with a Hamiltonian as in equation \eqref{Ham}, and Lieb-Robinson bound as quantified by $ \Delta_k(t) $ in \eqref{Delta_intro}, the generation of entanglement as quantified by the $ \alpha$-Rényi entropy, $0<\alpha \le 1$, starting from a product state $ \rho $, being $ l \ge 0 $ a variational parameter upon which \eqref{bound_sum_intro} must be minimized given the explicit expression of $ \Delta_k(t) $,  is upper bounded by: 
\begin{align} \label{bound_sum_intro}
 & \Delta S_\alpha (t) :=   S_\alpha \left( \Tr_{[1,L]} \rho(t) \right) - S_\alpha \left( \Tr_{[1,L]} \rho \right)  \nonumber \\ 
 & \le \frac{1}{1-\alpha} \sum_{k=l+1}^{L} \log \Big [ 1- \alpha \int_0^{t}ds \, \Delta_{k-1}(s)  + (r^{k+1}-1)^{1-\alpha} \left(\int_0^{t}ds \, \Delta_{k-1}(s)\right)^\alpha \Big ] + (l + 1)\log r 
\end{align}
$ \Delta_k(t) $ is upper bounded by:
\begin{align} \label{Delta_intro}
 & \Delta_k(t) \le \int_0^t ds \big \| \big [ H_{k,k+1}+H_{-k-1,-k}, e^{is(H_{\Lambda_k}-H_{[0,1]})}H_{[0,1]}e^{-is(H_{\Lambda_k}-H_{[0,1]})} \big ] \big \| 
\end{align}
$ H_{\Lambda_j}:=\sum_{k=-j}^{j-1} H_{k,k+1} $.

\end{Theorem}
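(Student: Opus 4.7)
My plan is to telescope over a hierarchy of locally truncated evolutions, to estimate each entropy increment via the sharp continuity bound for $\alpha$-Rényi entropy announced in this paper's appendix, and to control the corresponding trace-distance differences through a Lieb-Robinson commutator argument.

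Concretely, for $k \in \{0,1,\ldots,L\}$ I would set $U_k(t) := e^{-iH_{\Lambda_k}t}$, $\rho^{(k)}(t) := U_k(t)\rho U_k^\dagger(t)$, and $\sigma_k(t) := \Tr_{[1,L]}\rho^{(k)}(t)$, so that $\rho^{(L)}(t) = \rho(t)$ and $\sigma_0(t) = \Tr_{[1,L]}\rho$ has vanishing entropy. Because $\rho$ is a product state and $U_k$ acts trivially outside $\Lambda_k$, $\sigma_k(t)$ factorises as the unchanged product reduction on $[-L,-k-1]$ tensored with a state effectively supported on the $k+1$ sites $[-k,0]$; hence $S_\alpha(\sigma_k(t))$ is computed in effective dimension $r^{k+1}$. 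Telescoping and applying the maximum-entropy bound $S_\alpha(\sigma_l(t)) \le (l+1)\log r$ to the base level gives
\begin{equation*}
\Delta S_\alpha(t) \le (l+1)\log r + \sum_{k=l+1}^{L}\bigl[S_\alpha(\sigma_k(t)) - S_\alpha(\sigma_{k-1}(t))\bigr],
\end{equation*}
which already accounts for the $(l+1)\log r$ remainder in \eqref{bound_sum_intro}. Audenaert's continuity bound in dimension $r^{k+1}$ then estimates each increment by $\tfrac{1}{1-\alpha}\log\bigl[1-\alpha T_k + (r^{k+1}-1)^{1-\alpha} T_k^\alpha\bigr]$, where $T_k$ is the (suitably normalised) trace distance between $\sigma_k(t)$ and $\sigma_{k-1}(t)$.

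It then remains to show $T_k \le \int_0^t ds\, \Delta_{k-1}(s)$. Exploiting invariance of bipartite $\alpha$-Rényi entropy under local unitaries on each side of the cut between sites $0$ and $1$, I would first conjugate out the bulk evolution generated by $H_{\Lambda_{k-1}} - H_{[0,1]}$, which decomposes cleanly into a left plus right half supported on disjoint regions. In the resulting interaction picture the only source of entangling dynamics is the time-dependent operator $H_{[0,1]}(s) := e^{is(H_{\Lambda_{k-1}} - H_{[0,1]})} H_{[0,1]} e^{-is(H_{\Lambda_{k-1}} - H_{[0,1]})}$. A second Duhamel expansion around the extra boundary bonds $H_{k-1,k}+H_{-k,-k+1}$, which appear in $H_{\Lambda_k}$ but not in $H_{\Lambda_{k-1}}$, then converts $\rho^{(k)}-\rho^{(k-1)}$ into a time integral of commutators of these boundary bonds with $H_{[0,1]}(s)$, after which contractivity of the partial trace delivers exactly the bound by $\Delta_{k-1}(s)$ in the form stated.

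The main technical obstacle lies in this last step: combining two nested Duhamel expansions must yield a genuine commutator rather than a naive $\mathcal{O}(t\|H\|)$ bound, and the cancellations need to leave precisely the correct generator $H_{\Lambda_{k-1}}-H_{[0,1]}$ (not the full $H_{\Lambda_k}$ or $H$) in the Heisenberg evolution of $H_{[0,1]}$. Once this commutator structure is in place, the usual Lieb-Robinson machinery guarantees that $\Delta_{k-1}(s)$ decays in $k$ once $k-1 > \vlr s$, so that only $\mathcal{O}(\vlr t)$ terms in the sum contribute non-trivially; the variational choice of $l\sim \vlr t$ then produces the linear-in-$t$ bound \eqref{entropy_bound} with the Audenaert-type prefactors worked out.
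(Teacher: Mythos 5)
Your proposal is correct and follows essentially the same route as the paper: telescoping over the truncated evolutions, exploiting the product-state factorization to reduce to dimension $r^{k+1}$, applying Audenaert's continuity bound, and controlling the trace distance by passing to the interaction picture generated by $H_{\Lambda_{k-1}}-H_{[0,1]}$ and Duhamel-expanding in the extra boundary bonds, exactly as in appendix \ref{bound_trace_dist}. The only cosmetic difference is that you telescope with the bare unitaries $e^{-itH_{\Lambda_k}}$ and conjugate away the non-entangling part only at the trace-distance stage, whereas the paper builds this conjugation into the operators $V_{\Lambda_k}$ from the outset; since the conjugating unitaries factorize across the cut, the two formulations are equivalent.
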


\begin{proof}
 The reader will be guided from equation \eqref{Ham} to equation \eqref{from_52} to the proof of \eqref{bound_sum_intro}.
\end{proof}

When the shape of the Lieb-Robinson bound follows uniquely from the locality of the Hamiltonian, as explained for example in appendix \ref{L-R_bounds}, equation \eqref{bound_sum_intro} implies \eqref{entropy_bound}. A more restrictive Lieb-Robinson bound that implies, for example, a logarithm light cone, gives rise from \eqref{bound_sum_intro} to a dynamical generation of entropy that at long times follows a $ \log t $-law, as we have proven in \cite{Toniolo_2024_3}.

We are also able to generalize \eqref{entropy_bound}, see section \ref{low_ent_in_state}, to the evolution of an initial pure state with $\alpha$-Rényi entropy up to $ O \left( \log L \right) $, in this case the constant $ K' $ becomes of the order of the initial entropy of the state. Independently from the existence of an energy gap, the bound \eqref{entropy_bound} shows that a state of a one dimensional local system with interactions decaying fast enough, that has an efficient MPS representation, continues upon time-evolution to have an efficient MPS representation up to times at least of the order of $ \log L $.

The linear dependence on $ t $ of the upper bound on  $ \Delta S_\alpha(t) $ that we have established is the best possible for large $ t $. In fact in \cite{Schuch_Wolf_2008} a lower bound on the dynamical generation, starting from a product state, of the entanglement entropy for the Ising model was proven: $ \Delta S(t) \ge \frac{4}{3\pi} t -\frac{1}{2} \ln t -1 $.
The Ising model falls into the class of models, $ 1 $-dimensional nearest-neighbor, that we are considering here. $ \Delta S(t) $ denotes the variation of the von Neumann entropy.  $ \alpha $-Rényi entropies, with $0< \alpha < 1 $, are upper bounds of the von Neumann entropy, therefore they cannot growth slower than $ t $ for large $ t $.

The bound \eqref{entropy_bound} is easy to generalize to the case of a $ k $-neighbors Hamiltonian. We explicitly consider the case of interactions decreasing exponentially fast in section \ref{Renyi_bound_exp_tails}, proving with equation \eqref{Renyi_ent_bound_exp_tails} an upper bound linear in $ t $ as \eqref{entropy_bound} but only for values of $ \alpha $ sufficiently close, from below, to $ 1 $, according to \eqref{alpha_close_1}. 


It should be stressed that for pure states $ \rho = | \Phi \rangle \langle \Phi | $, $ | \Phi \rangle \in \mathcal{H}_1 \otimes \mathcal{H}_2 $, the $ \alpha = 1/2 $ Rényi entropy (of the partial trace) coincides with the logarithmic negativity $E_\mathcal{N}(\rho)$ \cite{Vidal_Werner}: 
\begin{align} \label{Renyi-logneg}
 E_\mathcal{N}(\rho) := \log_2 \| \rho^{T_1} \|_1 = S_{1/2}(\Tr_{\mathcal{H}_1} \rho)
\end{align}
In the appendix \ref{negativity_renyi} we show this equality following \cite{Vidal_Werner,Nach_2013}.
The logarithmic negativity has been studied in the context of many-body physics with field theoretical methods \cite{Calabrese_2012,Calabrese_2013,Alba_2013}, including its dynamics \cite{Coser_2014,Bertini_2022_1}.

Our work is organised as follows: in the section \ref{sec_Physical_Setting} we present the physical setting and discuss how the unitary dynamics is approximated making use of Lieb-Robinson bounds. In here we follow the approach employed in \cite{Osborne_2006, Eisert_Osborne_2006}. A short proof of the Lieb-Robinson bound, following \cite{Osborne_slides}, is in the appendix \ref{L-R_bounds}. In the section \ref{sec_Aud_bound} we present the upper bound on $\alpha$-Rényi entropies, $0 < \alpha < 1$, proven by Audenaert in \cite{Audenaert_2007} and discuss how it reduces to the more familiar Fannes-Audenaert-Petz bound \cite{Audenaert_2007, Winter_tight_2016} on the von Neumann entropy in the limit $\alpha \rightarrow 1$. In the section \ref{sec_upper_bound_local} we set up and sum up the series that gives rise to the upper bound on the Rényi entropies \eqref{entropy_bound}, discussing how the faster than exponential decrease of the spatial part of the Lieb-Robinson bound of a strictly local Hamiltonian enables the summation of the series for all $ 0 <  \alpha \le 1$, for arbitrarily large one-dimensional systems. We eventually state the final bound on the dynamical variation of the Rényi entropies \eqref{entropy_bound}. In the section \ref{Renyi_bound_exp_tails} we extend our bound to systems with exponential decay of interactions for values of $ 0 <  \alpha \le 1$ close enough to $ 1 $. We then show in \ref{low_ent_in_state} the generalization of the theory to pure states with entanglement of order $ \log L $. In the section \ref{Discussion} we discuss  possible extensions of our work and relations with other results in the literature. A set of appendices close the paper collecting proofs and technicalities.

\section{Physical Setting} \label{sec_Physical_Setting}

The physical setting is that of a one-dimensional spin chain with sites in the interval $ [-L,L] $. The local Hilbert space is $ \mathds{C}^r $. The Hamiltonian is a sum of nearest neighbors terms:
\begin{equation} \label{Ham}
  H=\sum_{j=-L}^{L-1} H_{j,j+1}
\end{equation}
$H_{j,j+1}$ is a short hand for $ \mathds{1}_r \otimes ... \otimes H_{j,j+1} \otimes ...  \otimes  \mathds{1}_r $, to make clear  that the support of $ H_{j,j+1} $ is on the sites $ \{j,j+1\} $.  We define $ J := \max_j \{ \|  H_{j,j+1} \| \} < \infty $, with $ \| \cdot \| $ denoting the operatorial norm, that is the maximum singular value.
Our final goal is the evaluation of the variation, associated to time evolution, of the $ \alpha $-Rényi entropies, with $ 0 < \alpha \le 1 $, of a reduced density matrix. Namely we want to upper bound as a function of time:
\begin{equation} \label{Delta_S}
\Delta S_\alpha (t) :=  | S_\alpha \left( \Tr_{[1,L]} \rho(t) \right) - S_\alpha \left( \Tr_{[1,L]} \rho \right) |
\end{equation} 
$ \Tr_{[1,L]} $ denotes partial tracing on the Hilbert space $ \bigotimes_{j=1}^L \mathcal{H}_j $, with $ \forall j $, $ \mathcal{H}_j = \mathds{C}^r $,  while $ \rho(t) $ is the evolution, at time $ t $ of the state $ \rho $, $ \rho(t) := e^{-itH} \rho e^{itH} $.
We rewrite the operator of unitary evolution $ e^{-itH} $, following the approach of \cite{Osborne_2006, Eisert_Osborne_2006}, to put in evidence that the term $ H_{0,1} $ of the Hamiltonian \eqref{Ham} is responsible for the ``spread'' of entanglement among the two halves of the system. We define
\begin{equation} \label{V_op}
 V(t):= e^{it (H_{[-L,0]}+H_{[1,L]})} e^{-itH} = e^{it (H-H_{[0,1]})} e^{-itH}
\end{equation}
$ H_{[-L,0]} := \sum_{j=-L}^{-1} H_{j,j+1} $ and $ H_{[1,L]} := \sum_{j=1}^{L-1} H_{j,j+1} $ collect the terms in the Hamiltonian \eqref{Ham} with support contained respectively in the intervals $ [-L,0] $ and $ [1,L] $. The term $ H_{0,1} $ has instead support on both these intervals.

Our theory holds with $ H $ as in \eqref{Ham} with open boundary conditions. If periodic boundary conditions were adopted instead two terms of the Hamiltonian, say $ H_{0,1} $ and $ H_{L,-L} $, would have overlapping supports with the two halves of the system after a bi-partition.

It is easy to verify that replacing $ e^{-itH} $ with $ V(t) $ leaves $ \Delta S_\alpha (t) $ in equation \eqref{Delta_S} invariant. In fact:
\begin{align}
& S_\alpha \big( \Tr_{[1,L]} V(t) \rho V^*(t) \big) = S_\alpha  \big( \Tr_{[1,L]} e^{it (H_{[-L,0]}+H_{[1,L]})} e^{-itH} \rho  e^{itH} e^{-it (H_{[-L,0]}+H_{[1,L]})}  \big)   \\
& = S_\alpha \big( \Tr_{[1,L]} e^{it H_{[-L,0]}} e^{itH_{[1,L]}} e^{-itH} \rho e^{itH} e^{-itH_{[1,L]}} e^{-it H_{[-L,0]}}  \big) \label{V10}   \\
&=S_\alpha \big[ e^{it H_{[-L,0]}} \big(  \Tr_{[1,L]}  e^{itH_{[1,L]}} e^{-itH} \rho e^{itH} e^{-itH_{[1,L]}}  \big) e^{-it H_{[-L,0]}} \big] \label{V11}  \\
&=S_\alpha \big(  \Tr_{[1,L]}  e^{itH_{[1,L]}} e^{-itH} \rho  e^{itH} e^{-itH_{[1,L]}} \big) \label{V12}  \\
&=S_\alpha \big(  \Tr_{[1,L]} e^{-itH} \rho e^{itH} \big) \label{V13} 
\end{align}
In \eqref{V10} we have used the fact that $ H_{[-L,0]} $ and $ H_{[1,L]} $ have disjoint supports threfore they commute, in \eqref{V11} that $ H_{[-L,0]} $ is unaffected by the partial trace because is not supported on the Hilbert space where the partial trace is acting upon, in \eqref{V12} that the Rényi entropy of a state is invariant by unitary conjugation of that state, finally in \eqref{V13} we have used the fact that $ H_{[1,L]} $ has support contained in the Hilbert space that is traced out therefore the cyclic property of the trace applies.


%
%

The operator $ V(t) $ in equation \eqref{V_op} also equals:
\begin{align} \label{int_picture}
V(t)= T \Big[ \exp \big(-i \int_0^t ds & e^{is(H_{[-L,0]}+H_{[1,L]})} \, H_{[0,1]} \, e^{-is(H_{[-L,0]}+H_{[1,L]})}  \big) \Big]
\end{align}
This is the so called interaction picture. $ T(\cdot) $ denotes time ordering, according to the rule: the  latest time goes on the left.
Equation \eqref{int_picture} is proven showing that, as \eqref{V_op}, is a solution of the differential equation
\begin{align}
  i \frac{d}{dt}V(t) = e^{it(H-H_{[0,1]})} H_{[0,1]} e^{-it(H-H_{[0,1]})} V(t) 
\end{align}
with a unique solution (once the initial condition $V(0)=\mathds{1}$ is set), therefore they coincide. The interaction picture \eqref{int_picture} already provides the hint that the support of $ V(t) $ expands  starting from the origin of the system, $ x=0 $, following the evolution of $ H_{[0,1]} $, that, captured by the Lieb-Robinson bound of the system, is at most linear in time.

\section{Audenaert bound on $\alpha $-Rényi entropies, $0 < \alpha < 1 $ } \label{sec_Aud_bound}

In the paper \cite{Audenaert_2007} Audenaert presents a sharp form of the Fannes bound on the difference among the von Neumann entropy of a pair of states (density matrices) $ \rho $ and $ \sigma $. The same bound also appears in the book of Petz \cite{Petz_Quantum_Information_book}, theorem 3.8, where the proof is attributed to Csiszár.

As a byproduct, Audenaert was able to obtain in \cite{Audenaert_2007} a sharp bound on the difference among the $ \alpha $-Rényi entropies, with $ 0 < \alpha < 1 $, of $ \rho $ and $ \sigma $. In what follows all the logarithms are in base $2$. The $ \alpha $-Rényi entropies, with $ \alpha \neq 1 $, are defined as:
\begin{equation} \label{Renyi_def}
 S_\alpha (\rho) = \frac{1}{1-\alpha} \log \Tr \rho^\alpha
\end{equation}
It holds $ S_\alpha (\cdot) \le S_\beta (\cdot) $ with $ \alpha \ge \beta $. This means that our attention is focused on quantum entropies that upper bound the von Neumann entropy, $ S $, that is the $ \alpha \rightarrow 1 $ limit of the $ \alpha $-Rényi entropies. 
\begin{equation}
  S (\rho) =  -\Tr \rho \log \rho 
\end{equation}
Denoting $ \lambda_j $ the eigenvalues of $ \rho $ in \eqref{Renyi_def}, this fact can be easily seen.
\begin{align}
&S_\alpha (\rho) = \frac{1}{1-\alpha} \log \sum_j \lambda_j^\alpha = \frac{1}{1-\alpha} \Big(\log \sum_j \lambda_j^\alpha |_{\alpha=1} + (\frac{\partial}{\partial \alpha} \log \sum_j \lambda_j^\alpha) |_{\alpha=1}(\alpha-1) + o(\alpha-1)  \Big)
\end{align}
With $ (\frac{\partial}{\partial \alpha} \log \sum_j \lambda_j^\alpha) |_{\alpha=1} = \sum_j \lambda_j \log \lambda_j $ and taking $ \lim_{\alpha \rightarrow 1} $ the result follows. 
With  $ T := \frac{1}{2}\|\rho-\sigma\|_1 $ the trace distance among the two states $ \rho $ and $ \sigma $, and $ d $ the dimension of the Hilbert space where the states are acting upon, the bound of Audenaert reads as equation A3 of \cite{Audenaert_2007}: 
\begin{equation} \label{Audenaert_bound}
 |S_\alpha(\rho)-S_\alpha(\sigma)| \le \frac{1}{1-\alpha} \log \left[ \left( 1-T \right)^\alpha + (d-1)^{1-\alpha}T^\alpha \right]
\end{equation}
This inequality is sharp, in the sense that for every value of $ T $, $ 0 \le T \le 1 $, there is a pair of states, at distance $ T $, that saturates the bound. These are given by:
\begin{equation} \label{saturating_states} 
 \rho=\textrm{Diag}(1,0, ... , 0), \hspace{1mm} \sigma=\textrm{Diag} \left( 1-T,\frac{T}{d-1}, ... ,\frac{T}{d-1} \right)
\end{equation}
The inverse is also true, namely that these are the only states that saturates the bound, see reference  \cite{Hanson_2017_1}.
We see that with $ \alpha = 0 $ the upper bound \eqref{Audenaert_bound} becomes $ T $ independent and therefore trivial, being always equal to $ \log d $, that is the maximal value of the $ \alpha$-Rényi entropy  $ \forall \alpha \ge 0 $. The $ \alpha = 0 $ Rényi entropy is also called Hartley, or max, entropy \cite{Vershynina_Scholarpedia}. In what follows we explicitly assume $ 0 < \alpha \le 1 $, with  $ \alpha = 1 $ taken as a limit.  
The states \eqref{saturating_states} play the same role also in the Audenaert-Fannes-Petz upper bound on the difference of von Neumann entropy:
\begin{equation} \label{Fannes_Audenaert_bound}
 |S(\rho)-S(\sigma)| \le T \log (d-1) + H_2(T,1-T)
\end{equation}
$ H_2(T,1-T) $ denotes the binary Shannon entropy: $ H_2(T,1-T) := -T\log T -(1-T) \log (1-T) $.
It can be easily shown that the bound \eqref{Fannes_Audenaert_bound} is the limit for $ \alpha \rightarrow 1 $ of the bound  \eqref{Audenaert_bound}. The Rényi entropies, with the von Neumann entropy being the continuation at $ \alpha = 1 $, are decreasing in $ \alpha $ for all positive $ \alpha $. This can be shown, for example, using the concavity of the logarithm and the Jensen inequality. The upper bound \eqref{Audenaert_bound} is also decreasing with $ 0 \le \alpha < 1$, we explicitly show this in the appendix \ref{decrease_Audenaert_bound}.

A comparison among the bounds \eqref{Audenaert_bound} and \eqref{Fannes_Audenaert_bound} 
shows that in the first one the trace distance among states is couple with the
exponential of the number of qudits in the systems (that is proportional to the dimension
of the Hilbert space), instead in the latter one is coupled with the number of qudits. This
means that it is harder to obtain an upper bound, for the dynamically generated
entanglement, that is system-size independent when the $ 0 \le \alpha < 1$ Rényi entropies are concerned.

Both the bounds \eqref{Audenaert_bound} and \eqref{Fannes_Audenaert_bound} are increasing in $ T $ only with $ T \in [0,1-1/d] $. With $ T = 1- 1/d $ the maximum value is reached, that for both entropies is equal to $ \log d $. As a consequence, since it is usually only possible to obtain upper bounds on the trace  distance the following slight modification of the bound \eqref{Audenaert_bound}, that is given in theorem 3.1 of \cite{Hanson_2017_2}, is useful. With $ T \le R \le 1 $ it is:

\begin{equation} \label{Audenaert_Datta}
  |S_\alpha(\rho)-S_\alpha(\sigma)| \le 
\begin{cases}
\frac{1}{1-\alpha} \log ( (1-R)^\alpha + (d-1)^{1-\alpha} R^\alpha), \hspace{1mm} \textrm{with} \, R \le 1-\frac{1}{d} \\
\log d, \hspace{1mm} \textrm{with} \, R \ge 1 - \frac{1}{d}
\end{cases}
\end{equation}
In \eqref{Audenaert_Datta} $ \Delta S_\alpha $ is increasing in $ R $. 
For the analogous resetting of the bound given in \eqref{Fannes_Audenaert_bound}, we refer to lemma 1 of \cite{Winter_tight_2016} and \cite{Hanson_2017_2}. Recently a new upper bound on the variation of von Neumann entropy that makes use of the trace distance and the norm distance among states has appeared \cite{Jabbour_2023}.

A further upper bound to \eqref{Audenaert_bound} and \eqref{Audenaert_Datta}, that is not tight but increasing in $ R $, for all $ 0\le R \le 1$, is: 
\begin{equation} \label{Audenaert_bound_sim}
  |S_\alpha(\rho)-S_\alpha(\sigma)| \le 
\frac{1}{1-\alpha} \log ( 1- \alpha R + (d-1)^{1-\alpha} R^\alpha) 
\end{equation}
This follows from $ (1-u)^\alpha \le (1-\alpha u) $ for $ u \le 1 $ and $ 0 < \alpha < 1$, and the fact that the logarithm is an increasing function.

An important remark about the bounds \eqref{Audenaert_bound} and \eqref{Fannes_Audenaert_bound} is that their derivatives in the trace distance, when evaluated at $ T=0 $, diverge. It means that increasing $ T $, from $ T=0 $, where the two states coincide, the difference in entropy increases at a diverging rate. For the von Neumann entropy and the $ \alpha$-Rényi entropy, with $ \alpha > 1 $, it has been established \cite{Van_Acoleyen_2013, Marien_2016, Vershynina_2019} that, for local Hamiltonians systems, the dynamically generated entropy actually grows at a finite rate from $ t=0 $.

In this work we will be interested in the dynamical generation of entropy over any time interval, proving that with strictly local Hamiltonians is always independent form the system's size for all $ \alpha  \in (0,1) $ , while when interactions decay exponentially there always exist, for any finite system, a set of $ \alpha < 1 $ close to $ 1 $ such that the scaling is system-size independent.

\section{Upper bounding $ \Delta S_\alpha (t) $ for a nearest neighbor Hamiltonian} \label{sec_upper_bound_local}

In this section we make use of the following notation: $ \Lambda_j := [-j,j] $, $ V_{\Lambda_j} $ is the $ V$-evolution associated to the Hamiltonian  $ H_{[-j,j]}:=\sum_{k=-j}^{j-1} H_{k,k+1} $, namely:
\begin{align} \label{V_restricted}
 & V_{\Lambda_j}(t):= e^{it (H_{[-j,0]}+H_{[1,j]})}  e^{-itH_{[-j,j]}} = e^{it (H_{[-j,j]}-H_{[0,1]})} e^{-itH_{[-j,j]}}
\end{align}
According to \eqref{V_restricted} it is: $ V_{\Lambda_L}(t):=V(t) $.

We have already defined in \eqref{Delta_S}, $ \Delta S_\alpha (t) :=  | S_\alpha \left( \Tr_{[1,L]} \rho(t) \right) - S_\alpha \left( \Tr_{[1,L]} \rho \right) | $ as the object of our investigation. We now set an upper bound to  $ \Delta S_\alpha (t) $ in the form of a telescopic sum:
\begin{align}
& \Delta S_\alpha (t)  =  | S_\alpha \left( \Tr_{[1,L]} V(t) \rho V^*(t) \right) - S_\alpha \left( \Tr_{[1,L]} \rho \right) | \nonumber \\
 & = | S_\alpha \left( \Tr_{[1,L]} V_{\Lambda_L}(t) \rho V_{\Lambda_L}^*(t) \right)  - S_\alpha \left( \Tr_{[1,L]}   V_{\Lambda_{L-1}}(t) \rho V_{\Lambda_{L-1}}^*(t) \right) + S_\alpha \left( \Tr_{[1,L]}   V_{\Lambda_{L-1}}(t) \rho V_{\Lambda_{L-1}}^*(t) \right) \nonumber \\
&-  ... + S_\alpha \left( \Tr_{[1,L]}   V_{\Lambda_{l}}(t) \rho V_{\Lambda_{l}}^*(t) \right) - S_\alpha \left( \Tr_{[1,L]}   \rho \right)  | \label{tele} \\
 & \le \sum_{k=l+1}^{L} | S_\alpha \left( \Tr_{[1,L]}   V_{\Lambda_{k}}(t) \rho V_{\Lambda_{k}}^*(t) \right) - S_\alpha \left( \Tr_{[1,L]}   V_{\Lambda_{k-1}}(t) \rho V_{\Lambda_{k-1}}^*(t) \right)| +  | S_\alpha \left( \Tr_{[1,L]}  V_{\Lambda_{l}}(t) \rho V_{\Lambda_{l}}^*(t) \right) - S_\alpha \left( \Tr_{[1,L]}   \rho \right)  | \label{upper_bound_series}
\end{align}
At this point a crucial observation  follows from the assumption of $ \rho $ being a product state
\begin{equation}
 \rho = \bigotimes_{j=-L}^L \rho_j 
\end{equation}
with $ \rho_j : \mathds{C}^r \rightarrow \mathds{C}^r $. This  implies that each $ \Tr_{[1,L]}   V_{\Lambda_k}(t) \rho V_{\Lambda_k}^*(t) $ factorizes: the unitary $ V_{\Lambda_k} $ is supported on the interval $ [-k,k] $, with a slight abuse of notation we write $ V_{\Lambda_k} = V_{\Lambda_k} \otimes \mathds{1}_{[-k,k]^c}$, where $ [-k,k]^c $ denotes the complement of $ [-k,k] $ in $ [-L,L] $. Moreover with 
\begin{equation}
 \rho_{[-k,k]}:= \bigotimes_{j=-k}^k \rho_j 
\end{equation}
we have that:
\begin{align} \label{product_state}
& \Tr_{[1,L]} \left( V_{\Lambda_k} \rho V^*_{\Lambda_k} \right) = \Tr_{[1,k]}  \Tr_{[k+1,L]}   \left( V_{\Lambda_k} \rho_{[-k,k]} V^*_{\Lambda_k} \otimes \rho_{[-k,k]^c} \right) = \Tr_{[1,k]}     \left( V_{\Lambda_k} \rho_{[-k,k]} V^*_{\Lambda_k} \right) \otimes \Tr_{[k+1,L]} \rho_{[-k,k]^c} 
\end{align}
It follows that:
\begin{align}
 & | S_\alpha \left( \Tr_{[1,L]}   V_{\Lambda_{k}}(t) \rho V_{\Lambda_{k}}^*(t) \right) - S_\alpha \left( \Tr_{[1,L]}   V_{\Lambda_{k-1}}(t) \rho V_{\Lambda_{k-1}}^*(t) \right)| \\
 & = 
 | S_\alpha \left(\Tr_{[1,k]}     \left( V_{\Lambda_k} \rho_{[-k,k]} V^*_{\Lambda_k} \right) \otimes \Tr_{[k+1,L]} \rho_{[-k,k]^c} \right) - S_\alpha \left( \Tr_{[1,k]}     \left( V_{\Lambda_{k-1}} \rho_{[-k,k]} V^*_{\Lambda_{k-1}} \right) \otimes \Tr_{[k+1,L]} \rho_{[-k,k]^c} \right) | \label{emb} \\ 
 & = | S_\alpha \left( \Tr_{[1,k]} \left( V_{\Lambda_k} \rho_{[-k,k]} V^*_{\Lambda_k} \right) \right) - S_\alpha \left( \Tr_{[1,k]}  \left( V_{\Lambda_{k-1}} \rho_{[-k,k]} V^*_{\Lambda_{k-1}} \right) \right) | \label{S_prod}
\end{align}
In \eqref{emb} we have embedded $ V_{\Lambda_{k-1}} $ in the Hilbert space over the interval $ [-k,k] $ with a tensorization with the identity, that means (with a slight abuse of notation): $ V_{\Lambda_{k-1}} = \mathds{1}_{r}\otimes V_{\Lambda_{k-1}} \otimes \mathds{1}_{r}$. In equation \eqref{S_prod} we have used the additivity of Rényi entropies: $ S_\alpha(\rho \otimes \sigma) = S_\alpha(\rho) + S_\alpha(\sigma) $. We then realize that the density matrices in equation \eqref{S_prod} are supported on the Hilbert space over the interval $ [-k,0] $ that has dimensionality $ r^{k+1} $. 
We define:
\begin{align} \label{del_rho}
 T_k(t):= & \frac{1}{2} \b|  \Tr_{[1,k]}  \left( V_{\Lambda_{k}} \rho_{[-k,k]} V^*_{\Lambda_{k}} \right) - \Tr_{[1,k]}  \left( V_{\Lambda_{k-1}} \rho_{[-k,k]} V^*_{\Lambda_{k-1}} \right)   \b|_1 
\end{align}
Using the upper bound \eqref{Audenaert_bound_sim} for each term of the sum \eqref{upper_bound_series} and taking into account \eqref{S_prod} we have that:
\begin{align} \label{bound_sum_1}
  \Delta S_\alpha (t) \le & \frac{1}{1-\alpha} \sum_{k=l+1}^{L} \log \Big[ 1-\alpha T_k(t)  + (r^{k+1}-1)^{1-\alpha}T_k(t)^\alpha \Big] + (l + 1)\log r 
\end{align}
We have upper bounded the last term in \eqref{upper_bound_series}, $ | S_\alpha \left( \Tr_{[1,L]}   \rho_{l}(t) \right) - S_\alpha \left( \Tr_{[1,L]}   \rho \right)  |$, with the trivial bound for density matrices supported on a Hilbert space of dimension $ r^{l+1} $, that is $ (l + 1)\log r  $. This choice will be justified a posteriori after the minimization of \eqref{upper_bound_series} with respect to $ l $.

The relation among the trace distance $ T_k(t) $ and $ \Delta_{k-1}(t) $, as given in equation \eqref{C12}, namely
\begin{align} \label{from_52}
 T_k(t) \le \int_0^{|t|} ds \Delta_{k-1}(s) 
\end{align}
inserted into \eqref{bound_sum_1} allows to obtain equation \eqref{bound_sum_intro} of theorem \ref{theorem_entropy_LR}.

We now proceed to prove equation \eqref{entropy_bound} taking into account the bound $ \Delta_{k}(t') \le \frac{(4Jt')^k}{k!} $ arising from a generic nearest neighbor one-dimensional spin Hamiltonian \eqref{Ham}, as shown in the appendix \ref{L-R_bounds}.

The upper bound on the trace distance that is developed in the appendix \ref{bound_trace_dist}, being $ J := \max_j \{ \|H_{j,j+1} \| \} $ and the rescaled time $ t':= 4 \, J \, t $, reads:
\begin{align} \label{trace_dist}
 T_k(t) & \le \frac{1}{2} \|  V_{\Lambda_{k}} \rho_{[-k,k]} V^*_{\Lambda_{k}}  -V_{\Lambda_{k-1}} \rho_{[-k,k]} V^*_{\Lambda_{k-1}} \|_1  \le \frac{1}{4} \frac{(4 J t)^{k}}{k!} = \frac{1}{4} \frac{t'^{k}}{k!} 
\end{align}
We have used the fact that for a generic matrix $ A $ defined on a bipartite Hilbert space $ \mathcal{H}_1 \otimes \mathcal{H}_2 $, it holds $ \| \Tr_{\mathcal{H}_j} A \|_1 \le \|  A \|_1 $, with $ j \in \{1,2\} $, as a reference see, for example, equations 7 and 17 of \cite{Rastegin_2012}.
It also follows from: 
\begin{equation}
 \frac{1}{\dim(\mathcal{H}_1)} \mathds{1}_{\mathcal{H}_1} \otimes \left( \Tr_{\mathcal{H}_1} A \right)  = \int_W  \, W^* A  W d_{Haar}W 
\end{equation}
The integral above is performed according to the Haar measure over all the unitaries $ W $ supported on $ \mathcal{H}_1 $.
There is a generalization to all CPTP maps $  \mathcal{N} $, namely $ \| \mathcal{N}(\rho) -\mathcal{N}(\sigma) \|_1 \le \| \rho - \sigma \|_1 $.

The faster than exponential decrease in $ k $ of the upper bound \eqref{trace_dist} on $ T_k(t) $, ensures that, when coupled in \eqref{bound_sum_1} with  $ (r^{k+1}-1)^{1-\alpha} $,  the sum in \eqref{bound_sum_1} will converge. We will consider the $ L \rightarrow \infty $ limit.
Inserting \eqref{trace_dist} into \eqref{bound_sum_1}, and defining $ u_k(t'):= \frac{1}{4} \left( \frac{e t'}{k} \right)^k \ge \frac{1}{4} \frac{t'^{k}}{k!} $, see the brief discussion in appendix \ref{on_u} on $ u_k(t') $, we get:
\begin{align}\label{bound_sum_3}
 \Delta S_\alpha (t) & \le \frac{1}{1-\alpha} \sum_{k=l+1}^{L}  \log \Big [ 1- \alpha u_k(t') + r^{(k+1)(1-\alpha)} u_k(t')^\alpha \Big ] + (l + 1)\log r
\end{align}

In the appendix \ref{alg} is proven that \eqref{bound_sum_3}, at a fixed $t'$, has a unique minimum with respect to $ l $.  
A positive integer $ l $ that minimizes \eqref{bound_sum_3} satisfies the following pair of conditions. Denoting $ f(l,t') $ the RHS of \eqref{bound_sum_3}, the minimum is reached for that value of $ l $ such that $ f(l,t') $ is smaller of both $ f(l+1,t') $ and $ f(l-1,t') $.
\begin{equation}
 \begin{cases}
 &f(l,t') \le f(l+1,t') \\
 &f(l,t')  \le f(l-1,t')
 \end{cases}
\end{equation}
These equations must be thought at a fixed $ \alpha $ and at a fixed time $ t' $, with $ l $ determined as a function of $ \alpha $ and $ t' $. Then the equations for the minimization of $ \Delta S_\alpha(t) $ are:
\begin{equation}
 \begin{cases} \label{system_min}
& \frac{1}{1-\alpha} \ \log \left[ 1- \alpha u_{l+1}(t') + r^{(l+2)(1-\alpha)} u_{l+1}(t')^\alpha \right] \le \log r  \\
 & \frac{1}{1-\alpha} \ \log \left[ 1- \alpha u_l(t') + r^{(l+1)(1-\alpha)} u_l(t')^\alpha \right] \ge \log r 
 \end{cases}
\end{equation}
 This implies that despite we are not able to solve explicitly the system \eqref{system_min}, that can be done numerically, the approximate solution for $ u_{l+1}(t') $ that we are going to provide simply gives rise to a further upper bound to \eqref{Delta_S}. Nevertheless, as discussed after \eqref{entropy_bound}, for a generic local (nearest neighbor in this case) Hamiltonian the linear dependence in $ t' $ of the upper bound to $  \Delta S_\alpha $ is the best possible.

We choose to provide an approximate solution to \eqref{system_min} solving, again approximately, with $ \alpha $ far enough from $ 1 $, the following equation
\begin{equation}  \label{approx_min}
 \frac{1}{1-\alpha} \ \log \left[ r^{(l+1)(1-\alpha)} u_{l}(t')^\alpha \right] \approx \log r
\end{equation}
That leads to:
\begin{align}
 & u_{l}(t') \approx r^{-l\frac{1-\alpha}{\alpha}} \\
 & \frac{t'^{l}}{4}\left(\frac{e}{l}\right)^{l} \approx r^{-l\frac{1-\alpha}{\alpha}} \\
 & \frac{t'}{4^{\frac{1}{l}}}\left(\frac{e}{l}\right) \approx r^{-\frac{1-\alpha}{\alpha}} \\
 & e \, r^{\frac{1-\alpha}{\alpha}} \, t' \approx l \, 4^\frac{1}{l}
\end{align}
It is: $ 1 < 4^\frac{1}{l} \le 4 $ with $ l \ge 1 $. As the approximate solution of \eqref{system_min} we pick: 
\begin{equation} \label{l_alpha}
 l_\alpha(t') := c \, r^{\frac{1-\alpha}{\alpha}} \, t' 
\end{equation}
with $ c > e $. The reason for $ c > e $ in our approach is to ensure exponential decrease in $ t $ of the sum in  \eqref{bound_sum_3}, as discussed in appendix \ref{sum_details}.
We stress again that despite \eqref{l_alpha} being an approximate solution of \eqref{system_min}, when $ l $  is replaced with $ l_\alpha(t') $ in \eqref{bound_sum_3}, this still provides an upper bound to $ \Delta S_{\alpha}(t) $ because of the uniqueness of the minimum in $ l $ of \eqref{bound_sum_3}. We remark that $ l $ is an integer, while in general $ l_\alpha(t') $ is not an integer. Nevertheless \eqref{bound_sum_3} is still well defined with $ l $ real, see the change of variable $ k \rightarrow n $ in equation \eqref{var_change}.

Plugging \eqref{l_alpha} into \eqref{bound_sum_3}, we expect that for all $ t' $ the sum in \eqref{bound_sum_3} will be at most of order $ 1 $ because of the fast decrease of the $ u_k(t') $, at a fixed $ t' $, with $ k \ge l_\alpha(t') +1 $. The upper bound to \eqref{bound_sum_3}  is performed, in a way that allows us to take the limit $ \alpha \rightarrow 1 $, in appendix \ref{sum_details}. Recollecting that $ t':=4Jt $, this leads to 
\begin{equation} \label{bound_delta_S}
 \Delta S_{\alpha}(t) \le K \, r^{\frac{1}{\alpha}-1} \, J \, t \, \log r + K'
\end{equation}
with $ K=4c $ and $ K' \le \left(1+\frac{e}{4c(e-1)}\right)\log r $, see appendix \ref{sum_details}, both of order $ 1 $. The bound \eqref{bound_delta_S} holds in the limit of a system of infinite size $ L \rightarrow \infty $, in this case the saturation time for the entropy, namely the time such that the entropy reaches its maximal value, that is the logarithm of the size of the Hilbert space, is infinite as well, therefore we can look at the asymptotic behavior at long times of \eqref{bound_delta_S}, that is:
\begin{equation} \label{asimp_bound_delta_S}
 \lim_{t' \rightarrow \infty} \lim_{L \rightarrow \infty} \frac{\Delta S_{\alpha}(t')}{t'} \le c \, r^{\frac{1}{\alpha}-1}  \, \log r 
\end{equation}

It is interesting to compare our upper bound with the entanglement entropy rate obtained in \cite{Marien_2016} in the case of the von Neumann entropy, $ \alpha = 1 $. Their equation (133), being an upper bound on the rate for all times, implies, in their notations:
\begin{align} \label{Verst_rate}
 \Delta S_1(t) \le 2^{\nu +1} c' (\log d) A \sum_r r^{2\nu} \|h(r)\| \, t
\end{align}
In a one dimensional system, $ \nu=1$, and for a strictly local Hamiltonian, in \eqref{Verst_rate} it is: $ A = 1 $, $ c'=2 $, $ \sum_r r^{2\nu} \|h(r)\| \le J $. For comparison, in our notations, it is: $  \Delta S_1(t) \le 8 (\log r) J \, t $. Looking at the coefficients of proportionality with $ t $ in \eqref{bound_delta_S}, with $ \alpha=1 $, that is $ 4cJ \log r $, it appears that, being $ c > e $, we do slightly worse than them. If instead we use the L-R bounds obtained from graph theory, see equation 3.51 of \cite{Chen_2023}, that allows to replace $ 4J $ to $ 2J $ in \eqref{asimp_bound_delta_S}, then in comparison with the results of \cite{Marien_2016}, we do slightly better.

As a final remark we recall that despite the maximal growth of entanglement entropy is linear in time for a generic product state, the authors of \cite{Hutter_2012} have established that for a system coupled to a large environment the von Neumann entropy rate of the system's reduced density matrix is low with high probability.

\section{Upper bounding  $ \Delta S_\alpha (t) $ for a system with exponential decay of interactions} \label{Renyi_bound_exp_tails}

We will extend the results about the dynamical generation of $ \alpha$-Rényi entropy of a strictly local Hamiltonian to the case of an Hamiltonian with exponentially decaying interactions.

We are able to generalize the approach that we set up in section \ref{sec_upper_bound_local} for nearest-neighbour Hamiltonians to Hamiltonians that are a sum of terms that are supported on the whole system but with decaying interactions. 
The scheme that we develop works for interactions with decay-rate that is at least exponential, $ f $ denotes such decay.

Denoting $ H = \sum_r H_r $ the Hamiltonian, each term $ H_r $ is supported on the whole lattice, we will refer to $ H_r $ as ``centered'' in $ r $. $ H_r $ is an interaction decaying according to the function $ f $, meaning that there exist constants $ J $ and $ \xi $, that are independent from $ r $, such that with $ r $ inside the region $ X $, that for simplicity is assumed connected, it holds:
\begin{equation} \label{decay}
\big \| \frac{1}{2^{|X^c|}} \left( \Tr_{X^c} H_r \right) \otimes \mathds{1}_{X^c} - H_r \big \| \le  J \,  f \left( \frac{\textrm{dist}(r,X^c)}{\xi} \right)
\end{equation}
For simplicity in \eqref{decay} we are considering the case of qubits, namely, the local Hilbert space dimension is equal to $ 2 $.
The physical interpretation of \eqref{decay} is straightforward: the approximation, in norm, of $ H_r $ with an operator that is supported on the region $ X $ containing $ r $, improves enlarging the distance of $ r $ from the complement of $ X $. The definition \eqref{decay} is employed, for example, in  \cite{Lu_2024}. We observe that if $ H_r $ was strictly local with support contained in $ X $, then the LHS of \eqref{decay} would be exactly vanishing. Also taking $ f $ with compact support the RHS of \eqref{decay}
vanishes with $ X $ large enough, this corresponds to Hamiltonians with $ k $ nearest neighbors, a generalization of the case studied in section \ref{sec_upper_bound_local}.

A Hamiltonian with decaying interactions does not provide an immediate way to identify the terms $ H_r $ responsible for the spread of entanglement across the two halves of the system, that was given for a generic nearest-neighbor Hamiltonian by $ H_{[0,1]} $. We introduce a distance $ b $, that will be determined by minimization of the upper bound to the dynamical Rényi entropy, such that, the terms of $ H $ mostly responsible for the connection among the left and right halves of the system are those centered within the distance $ b $ from the origin: $ \sum_{|r| < b} H_r $. The physical intuition is that $ b $ will be  in relation with the decaying length $ \xi $ of definition \eqref{decay}.

With this in mind we set up a different telescopic sum than \eqref{tele}. The first step is to rewrite:
\begin{align} \label{first}
  e^{-itH}  = e^{-it \left( \widehat{H}_{[-L,-b]} + \widehat{H}_{[b,L]} \right)} e^{it \left( \widehat{H}_{[-L,-b]} + \widehat{H}_{[b,L]} \right)} e^{-itH}
\end{align}
Where we have introduced, with $ 1 \le b \le j \le L $:
\begin{align} \label{1-2j}
 \widehat{H}_{[b,j]} := & \frac{1}{2^{|([1,2j] \cap [1,L])^c|}} \left( \Tr_{([1,2j] \cap [1,L])^c}  \sum_{r=b}^j H_r \right)  \otimes \mathds{1}_{([1,2j] \cap [1,L])^c} 
\end{align}
$ \widehat{H}_{[b,j]} $ is supported on $ [1,2j]  \cap [1,L] $. We also define: 
\begin{align} \label{-2j-2j}
 \widehat{H}_{[-j,j]} := & \frac{1}{2^{|([-2j,2j] \cap [-L,L])^c|}} \left( \Tr_{([-2j,2j]\cap [-L,L])^c}  \sum_{r=-j}^j H_r \right) \otimes \mathds{1}_{([-2j,2j]\cap [-L,L])^c} 
\end{align}


We notice that $ \widehat{H}_{[-L,L]} = H $. 
In equation \eqref{first} we have put in evidence at the exponent the sum of the terms centered on the left half and right half of the system outside the interval $ (-b,b) $ that, as we said, represents the region that we assume to ``connect'' the left and right halves and be responsible of the entanglement spread.

The first factor of \eqref{first} is not supported across the two halves of the system therefore it does not contribute to the entropy. We now look at the second and third factor in \eqref{first}. If we would assume that in our lattice we renamed the lattice site $ j=0 $ into $ j=-1 $, namely we count the lattice sites of our system only with strictly negative, or strictly positive integers (this trick was used for example in \cite{Lu_2024}), then if the Hamiltonian $ H $ was a nearest neighbor Hamiltonian, then the $ V $ operator that we introduced in \eqref{V_op}, following \cite{Eisert_Osborne_2006, Osborne_2006}, 
 would coincide with the product of second and third factor in \eqref{first}.

We define 
\begin{align}
  \widehat{V}_{\Lambda_j}(t) := e^{it \left( \widehat{H}_{[-j,-b]} + \widehat{H}_{[b,j]} \right)}   e^{-it \widehat{H}_{[-j,j]}}
\end{align}
that is supported on $ [-2j,2j] \, \cap \, [-L,L] $. Then:
\begin{align}
 & \Delta S_\alpha (t) \nonumber \\ 
 & =  | S_\alpha \left( \Tr_{[1,L]} e^{-itH} \rho e^{itH} \right) - S_\alpha \left( \Tr_{[1,L]} \rho \right) | \nonumber \\
 & = | S_\alpha \left( \Tr_{[1,L]}   \widehat{V}_{\Lambda_L}(t)  \rho  \widehat{V}_{\Lambda_L}^*(t)  \right) - S_\alpha \left( \Tr_{[1,L]}   \rho \right)  |  \\
 & = | S_\alpha \left( \Tr_{[1,L]}  \widehat{V}_{\Lambda_L}(t)  \rho  \widehat{V}_{\Lambda_L}^*(t)   \right) -  S_\alpha \left( \Tr_{[1,L]} \widehat{V}_{\Lambda_{L-1}}(t)   \rho  \widehat{V}_{\Lambda_{L-1}}^*(t) \right) +  ... + S_\alpha \left( \Tr_{[1,L]} \widehat{V}_{\Lambda_l}(t)  \rho  \widehat{V}_{\Lambda_l}^*(t) \right) - S_\alpha \left( \Tr_{[1,L]}   \rho \right) | \\ 
 & \le \sum_{k=l}^{L-1}  | S_\alpha \left( \Tr_{[1,L]}  \widehat{V}_{\Lambda_{k+1}}(t)  \rho  \widehat{V}_{\Lambda_{k+1}}^*(t)   \right) -  S_\alpha \left( \Tr_{[1,L]} \widehat{V}_{\Lambda_{k}}(t)   \rho  \widehat{V}_{\Lambda_{k}}^*(t) \right)| + | S_\alpha \left( \Tr_{[1,L]} \widehat{V}_{\Lambda_l}(t)  \rho  \widehat{V}_{\Lambda_l}^*(t) \right) - S_\alpha \left( \Tr_{[1,L]}   \rho \right) | 
\end{align}
At this point we again assume that $ \rho $ is a product state, therefore according to \eqref{product_state}-\eqref{S_prod}, we have:
\begin{align}
 &\Delta S_\alpha (t) \nonumber \\
 & \le \sum_{k=l}^{L-1}  | S_\alpha \left( \Tr_{[1,2(k+1)]}  \widehat{V}_{\Lambda_{k+1}}(t)  \rho_{[-2(k+1),2(k+1)]}  \widehat{V}_{\Lambda_{k+1}}^*(t)   \right) - S_\alpha \left( \Tr_{[1,2(k+1)]} \widehat{V}_{\Lambda_{k}}(t)   \rho_{[-2(k+1),2(k+1)]}  \widehat{V}_{\Lambda_{k}}^*(t) \right)|  \nonumber \\
 & + | S_\alpha \left( \Tr_{[1,2l]} \widehat{V}_{\Lambda_l}(t)  \rho_{[-2l,2l]}   \widehat{V}_{\Lambda_l}^*(t) \right) - S_\alpha \left( \Tr_{[1,L]}  \rho_{[-2l,2l]}  \right) | \label{44}
\end{align}
In the above we have used the fact that  $   \widehat{V}_{\Lambda_{k}}(t)  $ is supported on $ [-2k,2k] $. The dimensional reduction associated with product states allows the application of the Audenaert upper bound \eqref{Audenaert_bound}. We then need to provide an upper bound to the trace distance that according to appendix \ref{sub_bound_exp_tail}, equations \eqref{127} and \eqref{g_def}, reads:
\begin{align} \label{trace_dist_tails}
 &R_k(t)  := \frac{1}{2} \|  \Tr_{[1,2(k+1)]} \left( \widehat{V}_{\Lambda_{k+1}}(t)  \rho_{[-2(k+1),2(k+1)]}  \widehat{V}_{\Lambda_{k+1}}^*(t) \right) - \Tr_{[1,2(k+1)]} \left(  \widehat{V}_{\Lambda_{k}}(t)  \rho_{[-2(k+1),2(k+1)]}  \widehat{V}_{\Lambda_{k}}^*(t) \right) \|_1 \\
 & \le \frac{1}{2} \|  \widehat{V}_{\Lambda_{k+1}}(t)  \rho_{[-2(k+1),2(k+1)]}  \widehat{V}_{\Lambda_{k+1}}^*(t) - \widehat{V}_{\Lambda_{k}}(t)  \rho_{[-2(k+1),2(k+1)]} \widehat{V}_{\Lambda_{k}}^*(t) \|_1  \nonumber \\ &\le   t \, J \, O \left(  g(b,J,t) \, f \left( \frac{k}{4 \xi} \right) \right)  
\end{align}
With $ g(b,J,t):= b \,t \, J((e^{\vlr t}-1) + a) + c $, with $ a $ and $ c $ of $ O(1) $, and with $ \vlr $ the Lieb-Robinson velocity of the Hamiltonian. We can now apply to \eqref{44} the Audenaert bound \eqref{Audenaert_bound}, that reads:
\begin{align} \label{bound_sum_tails}
  \Delta S_\alpha (t) \le \frac{1}{1-\alpha} \sum_{k=l}^{L-1} \log \Big[ 1-\alpha R_k(t)   + (2^{2(k+1)+1}-1)^{1-\alpha}R_k(t)^\alpha \Big] + 2l +1
\end{align}
We now consider explicitly the case of an Hamiltonian that is a sum of terms with exponential tails $ f\left(\frac{k}{\xi} \right) = e^{-\frac{k}{\xi}} $. In this case the upper bound to  $ \Delta S_\alpha (t) $ is proven in appendix \ref{sub_bound_exp_tail} and given by \eqref{final_bound_tails_app} in the limit $ L\rightarrow \infty $. The final result, for $ \frac{1}{1+\frac{1}{8\xi\ln2}} < \alpha \le 1$  and with $ \beta > \frac{\vlr}{\frac{1}{4\xi}-(2\ln2)\frac{1-\alpha}{\alpha}} $, is:
\begin{align} \label{bound_Rényi_tails}
  \Delta S_\alpha (t) \le 2\beta t + A(t)
\end{align}
 with $ \lambda:= \frac{\alpha}{4\xi} - 2\ln2(1-\alpha) > 0 $, $ A(t) $ is decreasing exponentially fast in $ t $ and given in \eqref{final_bound_tails_app} as:
\begin{align}
& A(t) :=  -\frac{\alpha}{1-\alpha} \xi (t J)^2 e^{\vlr t}e^{-\frac{\beta t}{4\xi}} \frac{1}{e^{\frac{1}{4\xi}}-1} + \frac{1}{1-\alpha} e^{(3\ln 2)(1-\alpha)}\left(\xi (t J)^2 \right)^\alpha e^{\alpha \vlr t}e^{-\lambda \beta t} \frac{1}{e^\lambda -1} +1
\end{align}
With
\begin{align} \label{alpha_close_1}
  \frac{1}{1+\frac{1}{8\xi\ln2}} < \alpha \le 1
\end{align}
 this implies that: 
\begin{align} 
\lim_{t\rightarrow \infty} \lim_{L\rightarrow \infty} \frac{\Delta S_\alpha (t)}{t} \le 2\beta \nonumber
\end{align}
Considering the lower bound on $ \beta $, the best upper bound on $ \Delta S_\alpha (t) $ reads
\begin{align} 
\lim_{t\rightarrow \infty} \lim_{L\rightarrow \infty} \frac{\Delta S_\alpha (t)}{t}  \le  \frac{c' \vlr}{\frac{1}{4\xi}- (2\ln2)\frac{1-\alpha}{\alpha}}
\end{align}
with $ c' > 2 $.
The limit $ \alpha \rightarrow 1 $, that provides the von Neumann entropy, is discussed in \eqref{G66}, giving
\begin{align} \label{Renyi_ent_bound_exp_tails}
  & \lim_{t\rightarrow \infty}  \lim_{L\rightarrow \infty} \frac{\Delta S_1 (t)}{t} \le 4 c'\vlr \xi
\end{align}

\section{Dynamical $\alpha$-Rényi entropy for initial low entangled states} \label{low_ent_in_state}

In this section we extend the theory that have developed in section \ref{sec_upper_bound_local} for product states to initial low entangled states. This will allow us to show that, for one dimensional systems, states with an efficient matrix product state (MPS) representation continue to have such efficient representation, upon time evolution, at least up to times of the order of $ \log L $.

The $O(\log L) $ scaling for the entanglement entropy can be found for the ground state of critical systems, like the XY model \cite{Jin_2004,Peschel_2004,Latorre_2004,Franchini_2007}, the Ising and XXZ model \cite{Vidal_2003_1,Latorre_2004,Calabrese_Cardy_2004}, and the Hubbard model \cite{Korepin_2004}. Remarkably, Korepin in \cite{Korepin_2004} also shows, using conformal field theory methods and the results from \cite{Affleck_1986}, that the entanglement entropy is recovered from the thermodynamic entropy as the system's temperature goes to zero. On the other hand with positive temperature the scaling of the entropy with the system's size is a volume law. The $O(\log L) $ scaling also appears in free fermionic models with finite Fermi surface \cite{Wolf_2006,Gioev_2006}. For free fermions the $O(\log L)$ scaling has been proven for all $\alpha$-Rényi entropies, $ \alpha > 0 $, in \cite{Leschke_2014} using previous results of A. V.  Sobolev.

The main technical tools employed in this section are: an important result about the existence of an efficient MPS representation for low entangled states \cite{Verstraete_Cirac_2006,Schuch_2008} in one dimension and an upper bound on the concavity of $\alpha$-Rényi entropies, $0 <\alpha <1$, proven using the theory of majorization, in appendix \ref{Up_Renyi_Con}, this result can be of independent interest.

The results of this section are based on an assumption about the price paid, in terms of entropy, by replacing a product state with its time evolution, we clearly state this in \ref{Assum_1}.

Let us consider a low entangled state vector $ |v\rangle \in (\mathds{C}^2)^{\otimes (2L+1)} $, namely we assume that $ S_\alpha(\Tr_{[1,L]} |v\rangle \langle v| ) \le O(\log L) $. The results of \cite{Verstraete_Cirac_2006} ensure that a low entangled state $ |v\rangle $, together with an additional assumption on the distribution of the tails of the distribution of the Schmidt coefficients of $ |v\rangle \langle v| $, see after equation (4) of \cite{Verstraete_Cirac_2006}, has an efficient MPS representation. This means that the bond dimension (D) of the MPS is of order $ \textrm{poly}(L) $. In fact the rank of the state is upper bounded by $ D^2 $ \cite{Schuch_Wolf_2008, Scholl_2011}.  See, for example, \cite{Scholl_2011,Cirac_RMP_2021} for reviews on MPS and related topics. We express this fact saying that $ |v\rangle \langle v|  $  has rank of $O(\textrm{poly}(L)) $ or equivalently that is the linear combination of $ O(\textrm{poly}(L)) $ terms:
\begin{align} \label{linear_comb}
 &|v\rangle \langle v| := \underset{k_{-L},...,k_{L}}{\underset{j_{-L},...,j_{L}}{\sum}} c_{(j_{-L},...,j_{L})} \overline{c}_{(k_{-L},...,k_{L})}   |j_{-L} \rangle  \langle k_{-L}|  \otimes ... \otimes |j_{L} \rangle  \langle k_{L}|
\end{align}
The $ 2^{2L+1} $ states $ \{ |j_{-L} \rangle  \langle j_{-L}|  \otimes ... \otimes |j_{L} \rangle  \langle j_{L}| \} $ in \eqref{linear_comb} are all orthonormal.  


In section \ref{sec_upper_bound_local} we upper bounded $ \Delta S_\alpha(t) $ for initial product states. Restricting our attention to pure product states, it is  $ S_\alpha(0)=0 $. As show in appendix \ref{Up_Renyi_Con}, equation \eqref{convex_comb}, a convex combination of $ O(\textrm{poly}(L)) $ pure product states has dynamical $ \alpha$-Rényi entropy upper bounded by $ K_\alpha t + K' + O(\log L) $: 
\begin{align} \label{efficient_product}
 & S_\alpha \left( \Tr_{[1,L]} U(t) \left( \sum_j q_j | j \rangle \langle j | \right) U^*(t) \right) \le K_\alpha t + K' + O(\log L)
\end{align}
\eqref{efficient_product} implies that, for times $ t \le O(\log L) $, the state $U(t) \left( \sum_j q_j | j \rangle \langle j | \right) U^*(t) $ has an efficient matrix product operator (MPO) representation.

Let us consider now a state vector $ | v \rangle $  with low entanglement, as above, and make a further assumption on $ | v \rangle $ formalized as follows.

\begin{Assumption} \label{Assum_1}  The pure state $ |v\rangle \langle v| $, with $ |v\rangle \in (\mathds{C}^2)^{\otimes (2L+1)} $, satisfies:
\begin{enumerate}[(i)]
\item
$ S_\alpha(\Tr_{[1,L]} |v\rangle \langle v| ) \le O(\log L) $
\item
For $ t \le O(\log L) $, with $ U(t)=e^{-itH} $ and $ H $ given in \eqref{Ham}, $ |v\rangle $ further satisfies:
\begin{align}
 & S_\alpha \Big( \Tr_{[1,L]} \underset{j_{-L},...,j_{L}}{\sum} \left( U(t)  |j_{-L} \rangle  \langle j_{-L}|  \otimes ... \otimes |j_{L} \rangle  \langle j_{L}| U^*(t) \right) U(t)  | v \rangle \langle v | U^*(t) \left( U(t)  |j_{-L} \rangle  \langle j_{-L}|  \otimes ... \otimes |j_{L} \rangle  \langle j_{L}| U^*(t) \right) \Big) \nonumber \\ & \le 
   S_\alpha \Big( \Tr_{[1,L]} \underset{j_{-L},...,j_{L}}{\sum} \left(  |j_{-L} \rangle  \langle j_{-L}|  \otimes ... \otimes |j_{L} \rangle  \langle j_{L}|  \right) U(t) | v \rangle \langle v | U^*(t) \left( |j_{-L} \rangle  \langle j_{-L}|  \otimes ... \otimes |j_{L} \rangle  \langle j_{L}| \right) \Big) + O(\textrm{poly}(\log L)) \label{eq_assum}
\end{align}
 \end{enumerate} 

\end{Assumption}
Let us discuss the meaning of this assumption. In equation \eqref{eq_assum} we have replaced $ U(t)  |j_{-L} \rangle  \langle j_{-L}|  \otimes ... \otimes |j_{L} \rangle  \langle j_{L}| U^*(t) $ with $   |j_{-L} \rangle  \langle j_{-L}|  \otimes ... \otimes |j_{L} \rangle  \langle j_{L}|  $, considering this is a product state we know, because of our bound \eqref{bound_delta_S}, that its entropy increases at most linearly in time, so our reasoning is that overall within a time scale $ O(\log L) $ the error in entropy in the LHS of \eqref{eq_assum} would be of the same order. The upper bound of Audenaert, equation \eqref{Audenaert_bound}, that we have employed to prove \eqref{bound_delta_S}, turns out to be inefficient  to try to actually prove (ii) from (i) in assumption \ref{Assum_1}  because it requires a too small trace distance. At the same time Audenaert's bound is designed for two completely generic states while the pair of states that we are considering in \eqref{eq_assum} are in relation with each other.

\begin{Lemma} \label{lemma_MPS_time}
With $ | v \rangle $ as in assumption \ref{Assum_1}, it follows that:
\begin{align}
 & S_\alpha \left( \Tr_{[1,L]} U(t)  | v \rangle \langle v |  U^*(t) \right) \le   K_\alpha t + K' + O(\textrm{poly}(\log L))
\end{align}

\end{Lemma}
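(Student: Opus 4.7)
The plan is to use the Schmidt decomposition of $|v\rangle$ to reduce the problem to the convex-combination case handled by \eqref{efficient_product}, and then invoke Assumption \ref{Assum_1}(ii) to close the gap between the coherent time evolution of the pure state $|v\rangle\langle v|$ and that of its Schmidt-diagonal classical counterpart.

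First I would exploit Assumption \ref{Assum_1}(i) together with the result of \cite{Verstraete_Cirac_2006,Schuch_2008}: the bound $S_\alpha(\Tr_{[1,L]}|v\rangle\langle v|) \le O(\log L)$ implies that the effective Schmidt rank of $|v\rangle$ across the cut $[-L,0]\,|\,[1,L]$ is $M = O(\textrm{poly}(L))$. Writing $|v\rangle = \sum_{s=1}^M \sqrt{\mu_s}\,|L_s\rangle\otimes|R_s\rangle$, I would introduce the classical mixture $\rho_{\textrm{diag}} := \sum_{s=1}^M \mu_s\,|L_s,R_s\rangle\langle L_s,R_s|$, which is a convex combination of $M$ pure product states across the cut and whose one-sided reductions coincide with those of $|v\rangle\langle v|$ at $t=0$.

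Second, I would directly apply \eqref{efficient_product} -- which itself combines the concavity upper bound from Appendix \ref{Up_Renyi_Con} with \eqref{bound_delta_S} applied to each $|L_s,R_s\rangle$ -- to this mixture, obtaining
\begin{equation*}
S_\alpha\!\left(\Tr_{[1,L]} U(t)\,\rho_{\textrm{diag}}\,U^{*}(t)\right) \le K_\alpha\, t + K' + O(\log L).
\end{equation*}
This is the easy side of the argument; everything is already done by the machinery of Section \ref{sec_upper_bound_local} combined with the concavity lemma.

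Third and crucially, I would invoke Assumption \ref{Assum_1}(ii), read in the Schmidt basis $\{|L_s,R_s\rangle\}$ of $|v\rangle$ rather than in the computational product basis -- the heuristic spelled out after \eqref{eq_assum}, namely that the $\alpha$-Rényi entropy cost of replacing time-evolved basis projectors by static ones is controlled by the incremental entropy of product states over a time $t \le O(\log L)$, is basis-agnostic -- to conclude that
\begin{equation*}
\big| S_\alpha(\Tr_{[1,L]} U|v\rangle\langle v|U^{*}) - S_\alpha(\Tr_{[1,L]} U\rho_{\textrm{diag}}U^{*}) \big| \le O(\textrm{poly}(\log L)).
\end{equation*}
Combining this with the second step immediately yields the claim. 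The hard part will be this third step: Assumption \ref{Assum_1}(ii) as stated concerns dephasing in the product basis, whereas the natural decomposition of a low-entangled $|v\rangle$ lives in its Schmidt basis. Justifying the transfer -- either through a formal extension of the assumption to arbitrary orthonormal bases (following the same heuristic the authors give to motivate Assumption \ref{Assum_1}(ii)) or by introducing an explicit unitary mapping $\{|L_s,R_s\rangle\}$ into the product basis -- is the key technical step, and the place where careful bookkeeping of the $O(\textrm{poly}(\log L))$ error as a function of the Schmidt rank $M$ is required.
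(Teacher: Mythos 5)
Your overall strategy (dephase $|v\rangle\langle v|$ into a classical mixture of product states, bound that mixture via \eqref{efficient_product}, and use Assumption \ref{Assum_1}(ii) to control the dephasing error) mirrors the paper's, but your choice of the Schmidt basis across the single cut $[-L,0]\,|\,[1,L]$ instead of the site-by-site computational product basis introduces two genuine gaps. The more serious one is in your second step: the vectors $|L_s\rangle\otimes|R_s\rangle$ are product only across the central cut, while $|L_s\rangle$ and $|R_s\rangle$ can each be arbitrarily entangled internally. The bound \eqref{bound_delta_S}, and hence \eqref{efficient_product} and \eqref{convex_comb}, is proven only for states of the form $\bigotimes_{j=-L}^{L}\rho_j$: the telescoping argument relies on the factorization \eqref{product_state} at \emph{every} intermediate cut $[-k,k]$ versus its complement, which is what reduces the effective Hilbert-space dimension in the Audenaert bound to $r^{k+1}$. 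For a merely bipartite product $|L_s,R_s\rangle$ this reduction fails and the Audenaert bound would have to be applied with $d=r^{2L+1}$, destroying the system-size independence. So you cannot apply \eqref{efficient_product} to $\rho_{\mathrm{diag}}$ as defined.

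The second gap is the one you flag yourself: Assumption \ref{Assum_1}(ii) is formulated for the computational product basis appearing in \eqref{linear_comb}, and a ``basis-agnostic'' reading in the Schmidt basis is a genuinely different, unproved assumption (the Schmidt vectors do not even form a complete basis of the full Hilbert space without an arbitrary completion). The paper sidesteps both problems by staying in the computational product basis throughout: it first applies the pinching inequality $S_\alpha(\sigma)\le S_\alpha\big(\sum_j P_j\sigma P_j\big)$ to $U(t)|v\rangle\langle v|U^*(t)$ in that basis (using that pinching in a product basis commutes with the partial trace), then invokes Assumption \ref{Assum_1}(ii) exactly as stated to commute the pinching past $U(t)$ at cost $O(\mathrm{poly}(\log L))$, which lands on $U(t)\big(\sum_j|c_j|^2\,|j\rangle\langle j|\big)U^*(t)$ --- a mixture of genuine site-by-site product states with $O(\mathrm{poly}(L))$ nonvanishing terms, to which \eqref{convex_comb} applies directly. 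To salvage your Schmidt-based route you would need an independent linear-in-$t$ Rényi-entanglement-growth bound for initial states that are product only across the central cut; the machinery of Section \ref{sec_upper_bound_local} does not supply one.
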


\begin{proof}

\begin{align}
 & S_\alpha \left( \Tr_{[1,L]} U(t)  | v \rangle \langle v |  U^*(t) \right)  \label{52} \\ 
 & \le S_\alpha \Big( \Tr_{[1,L]} \underset{j_{-L},...,j_{L}}{\sum} \left(  |j_{-L} \rangle  \langle j_{-L}|  \otimes ... \otimes |j_{L} \rangle  \langle j_{L}|  \right) U(t)  | v \rangle \langle v | U^*(t) \left( |j_{-L} \rangle  \langle j_{-L}|  \otimes ... \otimes |j_{L} \rangle  \langle j_{L}| \right) \Big) \label{53} \\
 & \le  S_\alpha \Big( \Tr_{[1,L]} \underset{j_{-L},...,j_{L}}{\sum}  U(t) \left(  |j_{-L} \rangle  \langle j_{-L}|  \otimes ... \otimes |j_{L} \rangle  \langle j_{L}| \right)   | v \rangle \langle v | \left( |j_{-L} \rangle  \langle j_{-L}|  \otimes ... \otimes |j_{L} \rangle  \langle j_{L}| \right) U^*(t)  \Big) \nonumber \\ & + O(\textrm{poly}(\log L)) \nonumber  \\
 & = S_\alpha \Big( \Tr_{[1,L]} \underset{j_{-L},...,j_{L}}  {\sum} |c_{(j_{-L},...,j_{L})}|^2 U(t) \left(  |j_{-L} \rangle  \langle j_{-L}|  \otimes ... \otimes |j_{L} \rangle  \langle j_{L}| \right) U^*(t)  \Big)  + O(\textrm{poly}(\log L)) \label{54}  \\
 & \le K_\alpha t + K' + O(\textrm{poly}(\log L)) \label{55} 
\end{align}
The inequality in \eqref{52} follows from pinching, see for example section II.5 of \cite{Bhatia}, and the fact that pinching with respect to a product basis and partial tracing commute. In \eqref{53} we made use of the assumption \ref{Assum_1}, and in \eqref{54} we employed the definition \eqref{linear_comb}, where following by the assumption  $ S_\alpha(\Tr_{[1,L]} |v\rangle \langle v| ) \le O(\log L) $, the number of non vanishing coefficients $ c_{(j_{-L},...,j_{L})} $ in \eqref{54} is at most of order $  O(\textrm{poly}(L)) $. Inequality \eqref{55} then follows from \eqref{convex_comb}.

\end{proof}

The conclusion from \eqref{55}, and lemma 2 of \cite{Verstraete_Cirac_2006}, is that a state $ |v\rangle $ with an efficient MPS representation will continue to have, under time evolution, such efficient representation for times $ t \le O(\log L) $.

%
%
%
%
%
%
%
%
%
%
%
%
%
%
%
%
%

\section{Outlook} \label{Discussion}

In this work we have made the first step towards a general theory that ensures to obtain from the Lieb-Robinson bounds of local Hamiltonians an upper bound to the dynamical generation of $ \alpha$-Rényi entropies. While here we developed the general theory and explicitly solved the case of systems with linear lightcones, in \cite{Toniolo_2024_3} we present an application to systems with logarithm lightcones, mostly in relation with the many-body-localization phenomenology.

A natural way to extend our theory is to look into higher dimensions and to consider explicitly the Hamiltonians with interactions decreasing slower than exponentially. This would allow to address the stability of the area law for $ \alpha$-Rényi entropies in local systems using the formalism of the quasi-adiabatic continuation, in analogy to what done for the von Neumann entropy by the authors of \cite{Van_Acoleyen_2013, Marien_2016}. Let us expand this remark.

Mariën {\it et al.} in \cite{Marien_2016} proved that, in any dimension, an area-law state remains area-law within the same phase of matter.
The definition of phase of matter that they refer to is as follows \cite{Hastings_Locality_2010}: let us consider a (local) Hamiltonian $H(s)$ depending smoothly on a parameter (or set of parameters) with a ground state state that remains gapped upon variation of the parameter. A quantum phase is the set of such ground states obtained varying the parameter.

In one dimension gapped ground states of local Hamiltonians are area law \cite{Hastings_2007}, therefore the stability of area law follows. This is true both for the von Neumann entropy and the $\alpha$-Rényi entropy, with $ 0 < \alpha <1 $, \cite{Huang_2014}.

In more than one dimension it is in general unknown if gapped ground states are area law, despite Anshu {\it et al.} proved a sub-volume law for 2d frustration-free spin systems \cite{Anshu_2022_1,Anshu_2022_2}.

 The mapping from one state to the other in the same phase is given by the so called quasi-adiabatic continuation: it is a unitary map generated by a many-body Hamiltonian, $ D(s) $, see below, with sub-exponential decaying terms, that in equation (2.19) of \cite{Bachmann_Auto_2012} reads:

\begin{align} \label{qa}
 D(s) = \int_{-\infty}^\infty dt W_\gamma(t) e^{itH(s)} \left( \frac{d}{ds}H(s) \right) e^{-itH(s)}
\end{align}
with $ W_\gamma(t) $ a filter function.

The entanglement rate theory of Mariën {\it et al.} applied to the quasi-adiabatic continuation shows that the difference among the von Neumann entropies of two states in the same phase is $ O(1) $ in any dimension.

\section*{Acknowledgments}
 Daniele Toniolo and Sougato Bose acknowledge support from UKRI grant EP/R029075/1.  The authors thank an anonymous referee for recalling equation \eqref{Renyi-logneg} to them.  D. T. is glad to acknowledge discussions about topics related to this work with Álvaro Alhambra, Ángela Capel, Angelo Lucia, Dylan Lewis, Emilio Onorati and Lluís Masanes.

\section*{Appendices}
\appendix

\section{Decrease in $ \alpha$ of the Audenaert upper bound on $ \Delta S_\alpha $} \label{decrease_Audenaert_bound}

$\alpha$-Rényi entropies converge to the von Neumann entropy both in the limit $ \alpha \uparrow 1 $ and $ \alpha \downarrow 1$. Therefore they are continued for  $ \alpha = 1 $ into the von Neumann entropy.
A simple application of the Jensen inequality in relation with the concavity of the  $ \log $ shows that with $ 0 <  \alpha < 1 $, $ S_\alpha $ is decreasing in $  \alpha $. In the same spirit we show that the Audenaert upper bound on $ \Delta S_\alpha $ of equation \eqref{Audenaert_bound}, and therefore \eqref{Audenaert_Datta}, is decreasing in $ \alpha $.

\begin{align}
 & \frac{d}{d \, \alpha}  \left[ \frac{1}{1-\alpha} \log \left( (1-T)^\alpha + (d-1)^{1-\alpha} T^\alpha \right) \right] = \nonumber \\ 
 & = \frac{1}{(1-\alpha)^2} \log \left( (1-T)^\alpha + (d-1)^{1-\alpha} T^\alpha \right)  \nonumber \\
 & + \frac{1}{1-\alpha} \frac{ (\log(1-T)) (1-T)^\alpha }{(1-T)^\alpha + (d-1)^{1-\alpha} T^\alpha } \nonumber \\
 & + \frac{1}{1-\alpha} \frac{ - (\log(d-1)) (d-1)^{1-\alpha} T^\alpha + (d-1)^{1-\alpha}  (\log T) T^\alpha }{(1-T)^\alpha + (d-1)^{1-\alpha} T^\alpha } \nonumber \\
 & = \frac{1}{(1-\alpha)^2} \log \left( (1-T)^\alpha + (d-1)^{1-\alpha} T^\alpha \right)  \nonumber \\
 & + \frac{1}{(1-\alpha)^2}  \frac{ (\log(1-T)^{1-\alpha}) (1-T)^\alpha }{(1-T)^\alpha + (d-1)^{1-\alpha} T^\alpha } \nonumber\\
 & + \frac{1}{(1-\alpha)^2}  \frac{  \log \left((d-1)^{\alpha-1} T^{1-\alpha} \right) (d-1)^{1-\alpha} T^\alpha }{(1-T)^\alpha + (d-1)^{1-\alpha} T^\alpha }\label{Jensen}
\end{align}
Defining 
\begin{align}
 & p_1 := \frac{(1-T)^\alpha}{(1-T)^\alpha + (d-1)^{1-\alpha} T^\alpha} \ge 0 \\
 & p_2 := \frac{(d-1)^{1-\alpha} T^\alpha}{(1-T)^\alpha + (d-1)^{1-\alpha} T^\alpha} \ge 0 \\
 & x_1 := (1-T)^{1-\alpha} \\
 & x_2 := (d-1)^{\alpha-1} T^{1-\alpha}
\end{align}
We have  $ p_1 + p_2 = 1 $, therefore since the logarithm is a concave function we can apply Jensen inequality:
\begin{equation}
 \sum_j p_j \log x_j \le \log \left( \sum_j p_j  x_j  \right) 
\end{equation}
to the second and third term in \eqref{Jensen}. Noticing that:
\begin{equation}
  \sum_j p_j  x_j = \frac{1}{(1-T)^\alpha + (d-1)^{1-\alpha} T^\alpha}
\end{equation}
this results in:
\begin{equation}
 \frac{d}{d \, \alpha} \left[ \frac{1}{1-\alpha} \log \left( (1-T)^\alpha + (d-1)^{1-\alpha} T^\alpha \right) \right] \le 0 
\end{equation}

\section{Lieb-Robinson bounds for strictly local Hamiltonians} \label{L-R_bounds}

The results of this section partially appeared in \cite{Osborne_slides,Osborne_2007}, we include it because the faster than exponential decrease of the spatial part of the Lieb-Robinson bounds in \eqref{superexp} is the crucial ingredient for the summation of the series \eqref{upper_bound_series}. $ H $ is as in \eqref{Ham}, $ A $ is any operator supported in $ [0,1] $. The idea that motivates the definition of $ \Delta_j(t) $ as below is to quantify how much the evolution in time of an operator is affected by terms in the Hamiltonian that are far away from the support of such operator. We also note that the RHS of \eqref{in} is as in equation (39) of \cite{Haah_2021}. The way in which a linear lightcone follows from equation \eqref{superexp} was given, for example, in \cite{Baldwin_2023}. The work \cite{Baldwin_2025} contains the proof that on-site terms in the Hamiltonian cannot increase (but can certainly decrease \cite{Sims_Stolz_2012,Elgart_2023}) the Lieb-Robinson velocity.

\begin{align}
& \Delta_j(t) := \| e^{iH_{\Lambda_{j+1}}t}Ae^{-iH_{\Lambda_{j+1}}t} - e^{iH_{\Lambda_j}t}Ae^{-iH_{\Lambda_j}t}  \| \label{in}  \\ 
 & = \|  \int_0^t ds \, \frac{d}{ds}  \left[ e^{iH_{\Lambda_{j+1}}(t-s)}  \left(  e^{iH_{\Lambda_j}s}Ae^{-iH_{\Lambda_j}s} \right) e^{-iH_{\Lambda_{j+1}}(t-s)}  \right] \| \\
 & = \|  \int_0^t ds \,  e^{iH_{\Lambda_{j+1}}(t-s)}   \Big[  H_{\Lambda_{j+1}} - H_{\Lambda_{j}} , e^{iH_{\Lambda_j}s}Ae^{-iH_{\Lambda_j}s}   \Big] e^{-iH_{\Lambda_{j+1}}(t-s)}   \| \label{comm}
\end{align}
We notice that $  H_{\Lambda_{j+1}} - H_{\Lambda_{j}} = H_{j,j+1} + H_{-j-1,-j} $ and $ H_{\Lambda_{j-1}} $ have disjoint supports, therefore they commute, hence we can insert $ e^{iH_{\Lambda_{j-1}}s}Ae^{-iH_{\Lambda_{j-1}}s} $ in the RHS of the commutator in \eqref{comm} provided $ [0,1] $ that is the support of $ A $ does not overlap with $ [j,j+1] $ and $ [-j-1,-j] $, the smallest $ j $ that ensures this to be true is $ j=2 $.
\begin{align}
 & \Delta_j(t) = \|  \int_0^t ds \,  e^{iH_{\Lambda_{j+1}}(t-s)}   \Big[  H_{j,j+1} + H_{-j-1,-j}, e^{iH_{\Lambda_j}s}Ae^{-iH_{\Lambda_j}s} - e^{iH_{\Lambda_{j-1}}s}Ae^{-iH_{\Lambda_{j-1}}s}  \Big] e^{-iH_{\Lambda_{j+1}}(t-s)} \| \label{B4} \\
 & \le 4 \max \{ \|  H_{j,j+1} \| \, , \| \, H_{-j-1,-j} \| \}  \int_0^t ds \, \| e^{iH_{\Lambda_j}s}Ae^{-iH_{\Lambda_j}s} - e^{iH_{\Lambda_{j-1}}s}Ae^{-iH_{\Lambda_{j-1}}s}  \| \label{rec} 
\end{align}
We recognize that the integrand in \eqref{rec} is $ \Delta_{j-1}(t) $, meaning that we have set up a recursive equation.
Defining $ J := \max_j \{ \| H_{j,j+1} \| \} $, \eqref{rec} is rewritten as 
\begin{align}
 \Delta_j(t) &\le 4 J \int_0^t ds \Delta_{j-1}(s) \\
 & \le (4 J)^{j-1}  \int_0^t dt_1 \int_0^{t_1} dt_2 ... \int_0^{t_{j-2}} dt_{j-1}   \Delta_1(t_{j-1})   
\end{align}
At this point we observe that, based on equations \eqref{in}-\eqref{comm}
\begin{align}
 \Delta_1(t) \le 4 J \|A\|  \, t
\end{align}
In conclusion:
\begin{align} \label{supp_1}
\Delta_j(t) & \le  \|A \| (4 J)^{j}  \int_0^t dt_1 \int_0^{t_1} dt_2 ... \int_0^{t_{j-2}} t_{j-1} dt_{j-1}  \le  \|A \| (4 J)^{j}  \frac{t^{j}}{j!}     
\end{align}
that renormalizing time as $ t':= 4 J t $, reads:
\begin{align} \label{superexp}
\Delta_j(t) \le  \|A \|  \frac{t'^{j}}{j!}     
\end{align}

Following \cite{Baldwin_2023} let us show how a linear lightcone emerges from \eqref{superexp}.
\begin{align} \label{to_linear}
  \frac{t'^{j}}{j!}   \le  \left( \frac{et'}{j} \right)^j =  \exp \left( j \log \frac{et'}{j} \right) \le   e^{et'-j}
\end{align}
The quantity $ v_{LR} := 4eJ $ might be called Lieb-Robinson velocity.

It turns out that an estimate on $ \Delta_j(t) $ is equivalent to the usual form of the Lieb-Robinson bounds. Let us sketch this equivalence.

Given $ A $ and $ B $ operators defined for simplicity on a unidimensional lattice system $ \Lambda_L:=[-L,L] $, with $ A $ supported on $ [0,1] $ and the support of $ B $ on $ [l+1,l+1+|b|] $, implying $\dist(\supp(A),\supp(B) ) = l $, we get:
\begin{align}
 \|[A(t),B]\| & = \|[A(t)- e^{iH_{\Lambda_l}t}Ae^{-iH_{\Lambda_l}t},B]\|  \le  2\|B\| \|A(t)- e^{iH_{\Lambda_l}t}Ae^{-iH_{\Lambda_l}t}\|
\end{align}
With $ A(t) = e^{iH_{\Lambda_L}t}Ae^{-iH_{\Lambda_L}t} $, we get:
\begin{align}
 A(t)- e^{iH_{\Lambda_l}t}Ae^{-iH_{\Lambda_l}t}  & = e^{iH_{\Lambda_L}t}Ae^{-iH_{\Lambda_L}t} - e^{iH_{\Lambda_{L-1}}t}Ae^{-iH_{\Lambda_{L-1}}t}  + ...  
 + e^{iH_{\Lambda_{l+1}}t}Ae^{-iH_{\Lambda_{l+1}}t} - e^{iH_{\Lambda_l}t}Ae^{-iH_{\Lambda_l}t} 
\end{align}
Then:
\begin{align}
  \|A(t)- e^{iH_{\Lambda_l}t}Ae^{-iH_{\Lambda_l}t}\| \le  \|A\| \sum_{j=l}^\infty  \frac{t'^j}{j!} 
\end{align}
\begin{align}
 & \sum_{j=l}^\infty  \frac{t'^j}{j!}  = t'^l  \sum_{j=l}^\infty  \frac{t'^{j-l}}{j!} = t'^l  \sum_{k=0}^\infty  \frac{t'^{k}}{(k+l)!}  = \frac{ t'^l}{l!}  \sum_{k=0}^\infty  \frac{t'^{k}}{{k+l \choose k}k!} \le \frac{ t'^l}{l!}  \sum_{k=0}^\infty  \left( \frac{t'}{l+1} \right)^k \label{last_sum} 
\end{align}
In the last step of \eqref{last_sum} we have used $ {k+l\choose k} \ge \frac{(l+1)^k}{k!} $. With $ t' < l+1 $ it follows:
\begin{align}
  \sum_{j=l}^\infty  \frac{t'^j}{j!} \le \frac{ t'^l}{l!} \frac{1}{1-\frac{t'}{l+1}}
\end{align}
With $ t' \le \frac{l+1}{2} $, it is $ \frac{1}{1-\frac{t'}{l+1}} \le 2 $, then overall, given that $ l:= \dist(\supp(A),\supp(B))$, it holds:
\begin{align}
 \|[A(t),B]\| \le 4 \|A\| \|B\| \frac{ t'^l}{l!} 
\end{align}
We stress that the condition  $ t' \le \frac{l+1}{2} $ is not restrictive, in the sense that the trivial bound to $ \|[A(t),B]\| $, for generic $ A $ and $ B $, is $ 2 \|A\| \|B\| $, then let us determined the time $ t' $ such that $ \frac{ t'^l}{l!} = \frac{1}{2} $, that with $ l $ large enough is
\begin{align}
  t'= \left( \frac{l!}{2} \right)^{\frac{1}{l}} \le (el)^{\frac{1}{l}} \frac{l}{e} \le \frac{l+1}{2}
\end{align}
To obtain a bound on $ \Delta_j(t) $ from a L-R bound is immediate considering equation \eqref{comm}.

We now consider, in one dimension, the more general case of an operator $ A $ with support contained in $ \Lambda_k := [-k,k] $. Moving the origin of the lattice this can always be the case. Let us consider $ j > k $. To upper bound $ \Delta_j(t) $ we repeat the same steps from   \eqref{in} to \eqref{rec}, noticing that the insertion in \eqref{B4} of the term $ e^{iH_{\Lambda_{j-1}}s}Ae^{-iH_{\Lambda_{j-1}}s} $ is allowed provided $ j > k $, meaning that the smallest value of $ j $ such that this is possible is $ j=k+1 $. This implies that the iteration continues till under the multiple time-integral we have $ \Delta_k $.
\begin{align}
 \Delta_j(t) &\le 4 J \int_0^t ds \Delta_{j-1}(s) \\
 & \le (4 J)^{j-k}  \int_0^t dt_1 \int_0^{t_1} dt_2 ... \int_0^{t_{j-k-1}} dt_{j-k}   \Delta_k(t_{j-k})   
\end{align}
At this point we observe that, based on equations \eqref{in}-\eqref{comm}
\begin{align}
 \Delta_k(t) \le 4 J \|A\|  \, t
\end{align}
In conclusion:
\begin{align} \label{supp_k}
 \Delta_j(t) 
& \le  \|A \| (4 J)^{j-k+1}  \int_0^t dt_1 \int_0^{t_1} dt_2 ... \int_0^{t_{j-k-1}} dt_{j-k} t_{j-k} \\
& \le  \|A \| (4 J)^{j-k+1}  \frac{t^{j-k+1}}{(j-k+1)!}     
\end{align}
The equation \eqref{supp_k} reduces to \eqref{supp_1} in the case $ k=1 $.
Let us consider an operator $ B $ supported at a distance $ l $ from $ A $, that we recall has support contained in $ [-k,k] $. Repeating the same approach as above we find that:
\begin{align} \label{general_L-R}
 \|[A(t),B]\| \le 4 \|A\| \|B\| \frac{ (4Jt)^l}{l!} 
\end{align}
The Lieb-Robinson bound \eqref{general_L-R} depends only on the norms of the operators $ A $ and $ B $, and $ J $, but not on the size of their supports.

The authors of the review \cite{Chen_2023} using graph theory \cite{Chen_2021} are able, see their equation 3.51, to obtain a better scaling with respect to time of the L-R bound that amounts to replace $ 4J $ with $ 2J $ in \eqref{general_L-R}.

For a more general approach that also applies to any dimension, see the derivation of Lieb-Robinson bound in appendix C of \cite{Masanes_2009}.

We conclude this section with two remarks. If the operator $ A $ is positive, $ A \ge 0 $, then the rescale time becomes $ t'=2\,J\,t$, in fact we can use $ \|A - \frac{\|A\|}{2}\mathds{1} \| = \frac{\|A\|}{2} $, from \cite{Hastings_2014}, for a better upper bound. With $ B $ a generic operator, we have:
\begin{align}
  \| [A,B] \| = \| [A - \frac{\|A\|}{2}\mathds{1},B] \| 
  \le 2\|A - \frac{\|A\|}{2}\mathds{1} \| \, \| B \| = \| A  \| \, \| B \|
\end{align}

A more general relation is actually true, namely, if $ A \ge 0 $, denoting $ \lambda_{max} $ and $ \lambda_{min} $ the largest and smallest eigenvalues of $ A $, then $ \|A - \lambda_{min}\mathds{1} - \frac{\lambda_{max}-\lambda_{min}}{2}\mathds{1} \| = \frac{\lambda_{max}-\lambda_{min}}{2} $. With $ \lambda_{min}=0 $ we recover the previous one.

The second remark regards the extension of \eqref{superexp} to any Schatten norm, that, with $ |A|:=\sqrt{A^*A} $, is defined as:
\begin{align}
 \|A\|_p := \left(\Tr |A|^p \right)^{\frac{1}{p}}
\end{align}
Using H\"{o}lder's inequality, we have $ \|A \, B\|_p \le \|A \|_p \| B\| $. This implies:
\begin{align}
 \| e^{iH_{\Lambda_{j+1}}t}Ae^{-iH_{\Lambda_{j+1}}t} - e^{iH_{\Lambda_j}t}Ae^{-iH_{\Lambda_j}t}  \|_p \le  \|A \|_p  \frac{t'^{j}}{j!}    
\end{align}
This is relevant when considering quantities like the so called out-of-time-correlators (OTOC): $ \Tr \left( \sigma  A(t)BA(t)B \right) $. Denoting $ \sigma=\frac{\mathds{1}}{d}$ the maximally mixed state, with $ A=A^* $, $ B=B^*$, $ A^2=B^2=\mathds{1} $, and with $ j $ the distance among the supports of $ A $ and $ B $ we get:
\begin{align}
  \frac{1}{d} \| [A(t),B] \|_2^2 := \frac{1}{d} \Tr \left( [A(t),B]^* [A(t),B] \right) = 2 - \frac{2}{d}  \Tr \left(  A(t)BA(t)B \right) \le \frac{1}{d} \left(4 \|A \|_2  \frac{t'^{j}}{j!} \right)^2 =  \left(  4 \frac{t'^{j}}{j!} \right)^2
\end{align}
\begin{align}
 \frac{1}{d} \Tr \left(  A(t)BA(t)B \right) \ge 1 - 8 \left(   \frac{t'^{j}}{j!} \right)^2 \ge 1 -  \left(  8\left( \frac{e t'}{j} \right)^j \right)^2   \label{lower_OTOC}
\end{align}
At $ t=0 $, $ A $ and $ B $ commutes, moreover, under the assumption $ A^2=B^2=\mathds{1} $, it holds: $ \frac{1}{d} \Tr \left(  ABAB \right) = \frac{1}{d} \Tr \left(  A^2B^2 \right) = 1 $, then the bound \eqref{lower_OTOC} is saturated. Also:
\begin{align} \label{OTOC_upper}
 & \frac{1}{d} | \Tr \left(  A(t)BA(t)B \right) | \le \frac{1}{d}  \Tr |  A(t)BA(t)B | \\ 
 & \le \frac{1}{d} \| B \| \|A(t)BA(t) \|_1  \le \frac{1}{d} \| B \| \|A^2(t)B \|_1 \le \frac{1}{d} \| B \|^2 \|A^2(t) \|_1 = 1 
\end{align}
In \eqref{OTOC_upper} we have used the fact that for two matrices $ C $ and $ D $ such that the product $ CD $ is Hermitean then: $  \|CD \|_1  \le \|DC\|_1 $, that is theorem 8.1 of \cite{Simon_Trace}.

The interpretation of \eqref{lower_OTOC} is straightforward: the decrease of $ \Tr \left(  A(t)BA(t)B \right) $ signals the increase of the overlap among the supports of $ A(t) $ and $ B $ starting from a time scale of the order of the distance among the supports of $ A $ and $ B $. 
We stress that with $ A \ge 0 $ and $ B \ge 0 $ both positive then $  \Tr \left(  A(t)BA(t)B \right) $ is positive as well, is fact a product of positive operators is positive.

\section{Evaluating the trace distance \eqref{del_rho}} \label{bound_trace_dist}

We now upper bound
\begin{align} \label{delta_rho}
 & T_{k}(t) := \frac{1}{2} \|\Tr_{[1,k]}  V_{\Lambda_{k}}(t) \rho_{[-k,k]}  V^*_{\Lambda_{k}}(t) - \Tr_{[1,k]} V_{\Lambda_{k-1}}(t) \rho_{[-k,k]}  V^*_{\Lambda_k-1}(t) \|_1 \\
 & = \frac{1}{2} \|\Tr_{[1,k]}  V_{\Lambda_{k}}(t) \rho  V^*_{\Lambda_{k}}(t) - \Tr_{[1,k]} V_{\Lambda_{k-1}}(t) \rho  V^*_{\Lambda_k-1}(t) \|_1
\end{align}
To do so we employ the following upper bound where $ U(t) $ is a generic time-dependent unitary operator with 
$ U(0)= \mathds{1}_{\mathcal{H}} $ and $ \sigma $ a generic density matrix.
\begin{align}
 & \|  U(t) \sigma U^*(t) - \sigma \|_1 = \| \int_0^t ds \, \frac{d}{ds} \left( U(s)\sigma U^*(s) \right) \|_1 \\
 & =  \| \int_0^t ds \, \left[ \left( \frac{d}{ds}  U(s) \right) \sigma U^*(s) + U(s) \sigma \left( \frac{d}{ds}  U^*(s) \right) \right] \|_1 \\
 &= \| \int_0^t ds \, \Big[ \left( \frac{d}{ds}  U(s) \right) \sigma U^*(s) - U(s) \sigma U^*(s) \left( \frac{d}{ds}  U(s) \right) U^*(s) \Big] \|_1 \\
 &= \| \int_0^t ds \, U(s) \Big[ U^*(s) \left( \frac{d}{ds}  U(s) \right) \sigma  - \sigma U^*(s) \left( \frac{d}{ds}  U(s) \right) \Big] U^*(s)  \|_1 \\
 & \le  \int_0^{|t|} ds \| \left[ U^*(s)  \frac{d}{ds}  U(s)  , \sigma  \right] \|_1 \\
 & \le 2 \int_0^{|t|} ds \|  U^*(s)  \frac{d}{ds}  U(s)   \|  \\
 & \le 2|t| \sup_{s\in[0,|t|]}  \| U^*(s)  \frac{d}{ds} U(s) \|  \label{up_com}
\end{align}

We now apply \eqref{up_com} to \eqref{delta_rho}.

\begin{align}
  & \|  V_{\Lambda_{k+1}}(t) \rho  V^*_{\Lambda_{k+1}}(t) - V_{\Lambda_{k}}(t) \rho  V^*_{\Lambda_k}(t) \|_1 \nonumber \\
  & =  \|  V_{\Lambda_{k}}^*(t)  V_{\Lambda_{k+1}}(t) \rho  V^*_{\Lambda_{k+1}}(t) V_{\Lambda_{k}}(t)  - \rho  \|_1 \\
  & \le 2 \int_{s=0}^{|t|} ds \| V^*_{\Lambda_{k+1}}(s) V_{\Lambda_{k}}(s) \frac{d}{ds} \left(  V_{\Lambda_{k}}^*(s)  V_{\Lambda_{k+1}}(s) \right) \| \\
  & = 2 \int_{s=0}^{|t|} ds \| V^*_{\Lambda_{k+1}}(s) V_{\Lambda_{k}}(s) \left( \frac{d}{ds}   V_{\Lambda_{k}}^*(s) \right) V_{\Lambda_{k+1}}(s) + V^*_{\Lambda_{k+1}}(s) \frac{d}{ds}  V_{\Lambda_{k+1}}(s) \| \label{der_V}
\end{align}
Recalling that $  V_{\Lambda_k}(s):= e^{is (H_{\Lambda_k} -H_I)}  e^{-is H_{\Lambda_k}} $, where we have defined $ H_I := H_{[0,1]} $, we have that:
\begin{align} \label{int_picture_NN}
 i \frac{d}{ds}  V_{\Lambda_{k}}(s) = e^{is (H_{\Lambda_k} -H_I)} H_I e^{-is (H_{\Lambda_k} -H_I)}V_{\Lambda_{k}}(s)
\end{align}
Then:
\begin{align}
  & T_{k+1}(t) \le  \int_{s=0}^{|t|} ds \| e^{is (H_{\Lambda_k} -H_I)} H_I e^{-is (H_{\Lambda_k} -H_I)} -  e^{is (H_{\Lambda_{k+1}} -H_I)} H_I e^{-is (H_{\Lambda_{k+1}} -H_I)} \|  =  \int_{s=0}^{|t|} ds \Delta_k(s) \label{C12} \\
  & \le   \int_{s=0}^{|t|} ds \|H_I\|  \frac{(4 J s)^{k}}{k!} \le  \frac{1}{4} \frac{t'^{k+1}}{(k+1)!} \label{final_trace_dist}
\end{align}
We remark that the dynamics of $ \Delta_k(t) $ is here generated by $ H_{\Lambda_k} - H_I $, but all the steps that lead to \eqref{superexp} are equally valid. To obtain \eqref{final_trace_dist} we have used \eqref{superexp}, the definition of $ J := \sup \{ \|H_{j,j+1}\| \} $, that implies $ \| H_I \| \le J $, and finally the definition of $ t':= 4Jt $.

Inserting \eqref{C12} into \eqref{bound_sum_1} we obtain \eqref{bound_sum_intro}.


\section{Brief discussion of $ u_k(t) $} \label{on_u}

We have defined $ u_k(t) := \frac{1}{4}\left(\frac{e \, t}{k} \right)^k \ge \frac{1}{4} \frac{t^k}{k!} $. $ u_k(t) $ upper bounds the trace distance. With a fixed $ k $, a sufficient condition for $ u_k(t) \le 1 $ is given by $ t \le \frac{k}{e} $. 

It is easy to see that, with a fixed $ t $, $ u_k(t) $ is a decreasing function of $ k $, in fact:
\begin{align} \label{u_decr}
 \frac{u_{k+1}(t)}{u_{k}(t)} = \frac{\left(\frac{e \, t}{k+1}\right)^{k+1}}{\left(\frac{e \, t}{k}\right)^{k}}=\frac{e \, t}{k+1} \left( \frac{k}{k+1}  \right)^k < \frac{e \, t}{k+1} 
\end{align}
The condition $ t \le \frac{k}{e} $, that implies $ u_k(t) < 1 $, also ensures that $ u_k(t) $ is a decreasing function of $ k $.

\section{Algorithm for the minimization of the sum that upper bounds $ \Delta S_\alpha(t) $} \label{alg}

In this section we discuss the existence and uniqueness of the solution $ l_\alpha(t) $ that satisfies the two equations \eqref{system_min}. We consider $ \alpha \in (0,1) $, the value $ \alpha=1 $ can be dealt with analogously. 
\begin{itemize}

\item
We fix $ t $  and consider a value $ \bar{l} $ of $ l $ such that both the LHSs of the equations in \eqref{system_min} are larger equal than $ \log r $, this is certainly the case with $ u_{\bar{l}} (t) \in [0,1] $ close enough to $ 1 $.

\item We increase $ l $ from $ \bar{l} $ to $ \bar{l} +1 $. We remark the crucial fact that the value taken from the LHS of the second equation is now the value that the LHS of the first equation in \eqref{system_min} had at the previous step. Therefore they cannot jump in one step from being both larger than $ \log r $ to be both smaller than $ \log r $. Since $ u_l(t) \le 1 $ is a decreasing function of $ l $ and both the LHSs of \eqref{system_min} are increasing in $ u $ with $ u_l(t) \in [0,1] $, they will both decrease going from $ \bar{l} $ to $ \bar{l} +1 $.

\item We will repeat increasing $ l $ till the LHS of the first equation will become smaller than $ \log r $. This will certainly happen at a certain point, in fact with $ u $ of the order of $ r^{-(l+2)\frac{1-\alpha}{\alpha}} $  the LHS of the first equation in \eqref{system_min} will be of order $ \log r $. This is the value of $ l $ that minimizes \eqref{bound_sum_1}. 
\end{itemize}

\section{Estimate of the series in \eqref{bound_sum_3}} \label{sum_details}

We want to estimate the series in \eqref{bound_sum_3}, that is copied below \eqref{bound_sum_3_copy},  using  $ l_\alpha(t') := c \, r^{\frac{1-\alpha}{\alpha}} \, t'  $, with $ c > e $ as in \eqref{l_alpha}, and $ u_k(t'):= \frac{1}{4} \left( \frac{e t'}{k} \right)^k $. The result will be that for every $ t'>0 $ and $ \alpha \in (0,1] $ \eqref{bound_sum_3_copy} is upper bounded by a constant of order $ 1 $, and is exponentially decreasing in $t'$.

\begin{equation}\label{bound_sum_3_copy}
\frac{1}{1-\alpha}  \sum_{k={l_\alpha+1}}^{L} \log \left[ 1 - \alpha u_k(t') + r^{(k+1)(1-\alpha)} u_k(t')^\alpha \right]
\end{equation}

We have already stated and used the fact that the Audenaert upper bound on $ \Delta S_\alpha $ is increasing in the trace distance. Let us discuss more in details that the argument of the $ \log $ in \eqref{bound_sum_3_copy} is an increasing function of $ 0 \le u \le 1 $, at a fixed $k$.
\begin{align}
0 = \frac{\partial}{\partial u} \left[ 1- \alpha u + r^{(k+1)(1-\alpha)} u^\alpha \right] = -\alpha + \alpha r^{(k+1)(1-\alpha)} u^{\alpha-1} \label{extremum}
\end{align}
That has the solution, corresponding to a maximum, as it is seen by the always negative second derivative, $ u = r^{k+1} $. $ r^{k+1} $ has minimal value, reached for $k=1$, (being $ k=l_\alpha(t) + 1 $, $k=1$ is given at $ t=0$), equal to $ r^2 $. The minimal value of $ r $ is $ 2 $, therefore, in the interval $ 0 \le u \le 1 $, the RHS of \eqref{bound_sum_3_copy} is an increasing function of $ u $. It means that replacing $ u $ with an upper bound (smaller than $ r^2 $) we still obtain an upper bound, despite the negative term $ - \alpha u $ in the argument of the $ \log $ in \eqref{bound_sum_3_copy}.

Let us change variable in the summation in \eqref{bound_sum_3_copy} setting $ k = l_\alpha + n $, then:
\begin{align}
 u_{l_\alpha + n}=\frac{1}{4}\left(\frac{e}{cr^\frac{1-\alpha}{\alpha}}\right)^{l_\alpha + n} \left(\frac{l_\alpha}{l_\alpha+n}\right)^{l_\alpha+n}
\end{align} 
The function $ \left(\frac{x}{x+n}\right)^{x+n} = \left(1-\frac{n}{x+n}\right)^{x+n}  $ is increasing in $ x $, with $ \lim_{x\rightarrow \infty} \left(1-\frac{n}{x+n}\right)^{x+n} = e^{-n}$. With $ x=l_\alpha $, the limit $ t \rightarrow \infty $ corresponds to $ l_\alpha \rightarrow \infty $. Then:
\begin{align}
 u_{l_\alpha + n} \le \frac{1}{4}\left(\frac{e}{cr^\frac{1-\alpha}{\alpha}}\right)^{l_\alpha + n}e^{-n} = \frac{1}{4}\left(\frac{e}{cr^\frac{1-\alpha}{\alpha}}\right)^{l_\alpha}\left(\frac{1}{cr^\frac{1-\alpha}{\alpha}}\right)^{n}
\end{align}
It follows that:
\begin{align}
  & r^{(l_\alpha + n+1)(1-\alpha)} u_{l_\alpha + n}^\alpha \le 
  \frac{r^{1-\alpha}}{4^\alpha} \left( \frac{e}{c} \right)^{ \alpha l_\alpha} \left( \frac{1}{c^\alpha} \right)^n 
\end{align}
We then upper bound \eqref{bound_sum_3_copy}:
\begin{align}
&\frac{1}{1-\alpha} \sum_{n=1}^{\infty} \log \left[ 1- \alpha u_{l_\alpha+n}(t') + r^{(l_\alpha+n+1)(1-\alpha)} u_{l_\alpha+n}(t')^\alpha \right] \label{var_change} \\
& \le \frac{1}{1-\alpha} \sum_{n=1 }^{\infty}  \Big[ - \alpha \frac{1}{4} \left(\frac{e}{c r^\frac{1-\alpha}{\alpha}}\right)^{l_\alpha} \left(\frac{1}{cr^\frac{1-\alpha}{\alpha}}\right)^{n}  + \frac{r^{(1-\alpha)}}{4^\alpha} \left(\frac{e}{c}\right)^{\alpha l_\alpha } \left( \frac{1}{c^\alpha} \right)^n  \Big] \\
 & = \frac{1}{1-\alpha}   \Big[ - \alpha \frac{1}{4} \left(\frac{e}{c r^\frac{1-\alpha}{\alpha}}\right)^{cr^\frac{1-\alpha}{\alpha}t' } \frac{1}{cr^\frac{1-\alpha}{\alpha}-1} + \frac{r^{(1-\alpha)}}{4^\alpha} \left(\frac{e}{c}\right)^{\alpha cr^\frac{1-\alpha}{\alpha}t' } \frac{1}{c^\alpha-1}\Big] \label{final_upper_sum}
\end{align}
With $ c > e $, it possible to see that for all $ t > 0 $, in the limit $ \alpha \rightarrow 0 $ the upper bound \eqref{final_upper_sum} tends to zero.

The factor in between square brackets in \eqref{final_upper_sum} is a differentiable function of $ \alpha $ and is vanishing with $ \alpha = 1$. This means that the limit $ \alpha \rightarrow 1 $ of \eqref{final_upper_sum} equals minus the first derivative in $ \alpha = 1$
of such function. An explicit calculation reveals that the limit $ \alpha \rightarrow 1 $ of \eqref{final_upper_sum} is again exponentially decreasing in $ t $.

Denoting $ K'-\log r $ the upper bound in \eqref{final_upper_sum}, from \eqref{bound_sum_3}  it follows:
\begin{equation} 
 \Delta S_{\alpha}(t) \le l_\alpha(t') \, \log r + K'
\end{equation}
Then:
\begin{equation} 
 \Delta S_{\alpha}(t) \le  c \, r^{\frac{1}{\alpha}-1} \, t' \, \log r  + K'
\end{equation}
Recollecting that $ t':= 4 \, t J $, we obtain, with $ K := 4  \, c $
\begin{equation} 
 \Delta S_{\alpha}(t) \le K \, r^{\frac{1}{\alpha}-1}  \, J \, t \, \log r + K'
\end{equation}
as in \eqref{entropy_bound}.


\section{Accounting for tails of interactions in the Hamiltonian for an upper bound of $\alpha$-Rényi entropy} \label{sub_bound_exp_tail}

In this section we provide the details to obtain the upper bound \eqref{trace_dist_tails}.
Applying \eqref{der_V} we obtain:
\begin{align}
  & \|  {\widehat V}_{\Lambda_{k+1}}(t) \rho {\widehat V}^*_{\Lambda_{k+1}}(t) -  {\widehat V}_{\Lambda_{k}}(t) \rho {\widehat V}^*_{\Lambda_{k}}(t) \|_1 \le 2 t \sup_{s\in[0,t]} \| {\widehat V}^*_{\Lambda_{k+1}}(s) {\widehat V}_{\Lambda_{k}}(s) \frac{d}{ds} \left(  {\widehat V}_{\Lambda_{k}}^*(s) \right) {\widehat V}_{\Lambda_{k+1}}(s) + {\widehat V}^*_{\Lambda_{k+1}}(s) \frac{d}{ds}  {\widehat V}_{\Lambda_{k+1}}(s) \| \label{der_V_hat}
\end{align}
Recalling that 
\begin{align}
  \widehat{V}_{\Lambda_k}(t) := e^{it \left( \widehat{H}_{[-k,-b]} + \widehat{H}_{[b,k]} \right)}   e^{-it \widehat{H}_{[-k,k]}}
\end{align}
it holds:
\begin{align} \label{int_picture_tails}
 i \frac{d}{ds}  \widehat{V}_{\Lambda_{k}}(s) = e^{it \left( \widehat{H}_{[-k,-b]} + \widehat{H}_{[b,k]} \right)} \left( \widehat{H}_{[-k,k]} - \widehat{H}_{[-k,-b]} - \widehat{H}_{[b,k]} \right) e^{-it \left( \widehat{H}_{[-k,-b]} + \widehat{H}_{[b,k]} \right)} \widehat{V}_{\Lambda_{k}}(s)
\end{align}
Comparing this equation with \eqref{int_picture_NN}, where the generator of $ V_{\Lambda_k} $ was a unitary conjugation of the same Hermitean operator, $ H_I $, for all $ k $, we see that in \eqref{int_picture_tails} instead we have the operator $  \widehat{H}_{[-k,k]} - \widehat{H}_{[-k,-b]} - \widehat{H}_{[b,k]} $ that, despite decaying outside of the interval $ [-b,b] $, is supported on $ [-2k,2k] $, therefore it is $k$-dependent. All together we have:  
\begin{align}
  & \|  {\widehat V}_{\Lambda_{k+1}}(t) \rho {\widehat V}^*_{\Lambda_{k+1}}(t) -  {\widehat V}_{\Lambda_{k}}(t) \rho {\widehat V}^*_{\Lambda_{k}}(t) \|_1 \label{97} \\ 
  &\le 2 t \sup_{s\in[0,t]} \|  e^{is \left( \widehat{H}_{[-k,-b]} + \widehat{H}_{[b,k]} \right)} \left( \widehat{H}_{[-k,k]} - \widehat{H}_{[-k,-b]} - \widehat{H}_{[b,k]} \right) e^{-is \left( \widehat{H}_{[-k,-b]} + \widehat{H}_{[b,k]} \right)} +  \nonumber \\
  & - e^{is \left( \widehat{H}_{[-k-1,-b]} + \widehat{H}_{[b,k+1]} \right)} \left( \widehat{H}_{[-k-1,k+1]} - \widehat{H}_{[-k-1,-b]} - \widehat{H}_{[b,k+1]} \right) e^{-is \left( \widehat{H}_{[-k-1,-b]} + \widehat{H}_{[b,k+1]} \right)}  \|  \\
  & \le 2 t \sup_{s\in[0,t]} \|  e^{is \left( \widehat{H}_{[-k,-b]} + \widehat{H}_{[b,k]} \right)} \left( \widehat{H}_{[-k,k]} - \widehat{H}_{[-k,-b]} - \widehat{H}_{[b,k]} \right) e^{-is \left( \widehat{H}_{[-k,-b]} + \widehat{H}_{[b,k]} \right)} +  \nonumber \\
  & - e^{is \left( \widehat{H}_{[-k-1,-b]} + \widehat{H}_{[b,k+1]} \right)} \left( \widehat{H}_{[-k,k]} - \widehat{H}_{[-k,-b]} - \widehat{H}_{[b,k]} \right) e^{-is \left( \widehat{H}_{[-k-1,-b]} + \widehat{H}_{[b,k+1]} \right)}  \| +  \nonumber \\
  & + 2 t \, \|  \widehat{H}_{[-k-1,k+1]} - \widehat{H}_{[-k-1,-b]} - \widehat{H}_{[b,k+1]}  -  \left( \widehat{H}_{[-k,k]} - \widehat{H}_{[-k,-b]} - \widehat{H}_{[b,k]} \right) \| \label{99}
\end{align}
We see that the first operator norm of \eqref{99} has almost the structure of a $ \Delta_k(s) $ as in appendix \ref{L-R_bounds}, the difference with \eqref{in} is that, in the case of a strictly local Hamiltonian the operator whose Heisenberg evolutions are computed had a fixed support, while in \eqref{99} it depends, despite only with tails, on $ k $.
We recollect that with $ r $ inside the region $ X $, that for simplicity is assumed connected, it holds:
\begin{equation} \label{G7}
\big \| \frac{1}{2^{|X^c|}} \left( \Tr_{X^c} H_r \right) \otimes \mathds{1}_{X^c} - H_r \big \| \le  J \,  f \left( \frac{\textrm{dist}(r,X^c)}{\xi} \right)
\end{equation}

Let us show that the second operator norm of \eqref{99} is bounded by $ O\left(J C_\xi f\left(\frac{k}{\xi}\right)\right) $, with $ f $ denoting the rate of decrease of interactions, as defined in \eqref{decay}. We start considering $ \widehat{H}_{[b,k+1]} - \widehat{H}_{[b,k]} $. To shorten the equations we define: $ \widetilde{\Tr}_{[m,n]^c}(A) := \frac{1}{2^{|[m,n]^c|}}  \Tr_{[m,n]^c} (A) \otimes \mathds{1}_{[m,n]^c} $. Let us assume for now $ 2k < L $, the case $ 2k \ge L $ is discussed in \eqref{104}. 
\begin{align}
 &\widehat{H}_{[b,k+1]} - \widehat{H}_{[b,k]} :=   \widetilde{\Tr}_{([1,2(k+1)]\cap[-L,L])^c}  \sum_{r=b}^{k+1} H_r  -   \widetilde{\Tr}_{([1,2k]\cap[-L,L])^c}  \sum_{r=b}^k H_r  \\
 & =  \widetilde{\Tr}_{([1,2(k+1)]\cap[-L,L])^c} \left( \mathds{1} -  \widetilde{\Tr}_{ ([1,2k]\cap[-L,L])^c \setminus ([1,2(k+1)]\cap[-L,L])^c } \right) \sum_{r=b}^{k} H_r   +  \widetilde{\Tr}_{([1,2(k+1)]\cap[-L,L])^c}   H_{k+1} \label{101}
\end{align}
The norm of the first term in \eqref{101} is small, of order $  J f\left( \frac{k}{\xi} \right) $, see \eqref{103}, in fact being the complements of the sets evaluated with respect to $ [-L,L] $:
\begin{align}
 ([1,2k]\cap[-L,L])^c \setminus ([1,2(k+1)]\cap[-L,L])^c = [2k+1,2(k+1)]\cap[-L,L] 
\end{align}
 the Hamiltonian term of $ \sum_{r=b}^{k} H_r $ with the closest centre to $ [2k+1,2(k+1)]\cap[-L,L] $ is $ H_k $. We also observe that the normalized trace defined above is a projection, see \cite{Lu_2024}, namely: $ \widetilde{\Tr}_{[m,n]^c} \left( \widetilde{\Tr}_{[m,n]^c}(A) \right) = \widetilde{\Tr}_{[m,n]^c}(A) $, therefore, as a superoperator, it has norm equal to $ 1 $. This implies that:
\begin{align}
& \|  \widetilde{\Tr}_{([1,2(k+1)]\cap[-L,L])^c} \left( \mathds{1} -  \widetilde{\Tr}_{([2k+1,2(k+1)]\cap[-L,L])} \right) \sum_{r=b}^{k} H_r \| \le \| \left( \mathds{1} -  \widetilde{\Tr}_{ ([2k+1,2(k+1)]\cap[-L,L]) } \right) \sum_{r=b}^{k} H_r  \| \label{102}  \\
& \le   J \sum_{r=b}^{k} f\left(\frac{2k+1-r}{\xi}\right)  \le   J C_\xi \left( f\left(\frac{k}{\xi}\right) - f\left(\frac{2k-b}{\xi}\right) \right) \label{103}
\end{align}
with $ C_\xi $ a function of $ \xi $. If, for example, the interactions have exponential tails, it is $ C_\xi = \xi $, in fact:
\begin{align}
 J \sum_{r=b}^{k} f\left(\frac{2k+1-r}{\xi}\right) & = J \sum_{r=b}^{k} e^{-\frac{2k+1-r}{\xi}} = J \sum_{l=k+1}^{2k+1-b} e^{-\frac{l}{\xi}} = J \sum_{l=0}^{2k+1-b} e^{-\frac{l}{\xi}} - J \sum_{l=0}^{k} e^{-\frac{l}{\xi}} \\
& = J \left( \frac{1-e^{-\frac{2k+2-b}{\xi}}}{1-e^{-\frac{1}{\xi}}} -  \frac{1-e^{-\frac{k+1}{\xi}}}{1-e^{-\frac{1}{\xi}}} \right) 
 \le J \xi \left( e^{-\frac{k}{\xi}} - e^{-\frac{2k-b}{\xi}} \right)
\end{align}
In the last line we have used $ e^{\frac{1}{\xi}}-1 \ge  \frac{1}{\xi} $

It is important to discuss what happens with $ 2k \ge L $, in this case we must recollect that the support, for example, of $ 
\widehat{H}_{[b,k]} $ is $ [1,2k] \cap [-L,L] $, therefore in \eqref{101} we have: 
\begin{align}
 & \left( [1,2k] \cap [-L,L] \right)^c \setminus \left( [1,2(k+1)]  \cap [-L,L] \right)^c  = [-L,0] \setminus [-L,0] =  \emptyset \label{104}
\end{align}
The partial trace associated with the empty set is the identity, therefore the term among parenthesis in \eqref{101} is vanishing.

We mention the fact that within this formalism we are able to recover the results for nearest neighbor Hamiltonians that we obtained in the first part of this paper, and actually to generalize those to the case of a $k$-neighbor Hamiltonian. In fact with $ f $ compactly supported and $ \xi $ the size of its support, the RHS of \eqref{G7} is vanishing. More in detail, considering
\begin{align} \label{f_compact}
 f\left(\frac{j}{\xi}\right) = \delta \left( \lfloor \frac{ j - \frac{1}{2}}{\xi} \rfloor , 0 \right)
\end{align}
$ \lfloor a \rfloor $ denotes the largest integer such that $ \lfloor a \rfloor \le a $. $ \delta(\cdot,\cdot) $ is the Kronecker delta. We take $ \xi $ integer, and $ \xi \ge 1 $. From the assumption $ r \in X $ in \eqref{G7}, follows that $ \dist(r,X^c) \ge 1 $, then $ j \ge 1 $ and integer. According to \eqref{f_compact}, with  $ j \le \xi $, it is $ f\left(\frac{j}{\xi}\right) = 1 $, with $ j > \xi $, $ f\left(\frac{j}{\xi}\right) = 0 $. This also shows that the value of $ b $, that we have introduced in \eqref{first}, for the case of a strictly local Hamiltonian that is the sum of $ \xi$-neighbors terms  equals $ \xi $.

We are now ready to upper bound the operator norm of the second term in equation \eqref{99}. To ease the notation all the sets are meant to be the intersection with $ [-L,L] $.
\begin{align}
 & \|  \widehat{H}_{[-k-1,k+1]} - \widehat{H}_{[-k-1,-b]} - \widehat{H}_{[b,k+1]}  -  \left( \widehat{H}_{[-k,k]} - \widehat{H}_{[-k,-b]} - \widehat{H}_{[b,k]} \right) \| \label{G16}  \\
 & = \| \left( \widetilde{\Tr}_{[-2(k+1),2(k+1)]^c} \left( \mathds{1} -  \widetilde{\Tr}_{ [-2k,2k]^c \setminus [-2(k+1),2(k+1)]^c } \right) \sum_{r=-k}^{k} H_r \right)  +  \widetilde{\Tr}_{[-2(k+1),2(k+1)]^c} \left( H_{-(k+1)} + H_{k+1} \right) \nonumber \\
 & -\left( \widetilde{\Tr}_{[-2(k+1),-1]^c} \left( \mathds{1} -  \widetilde{\Tr}_{ [-2k,-1]^c \setminus [-2(k+1),-1]^c } \right) \sum_{r=-k}^{-b} H_r \right)  -  \widetilde{\Tr}_{[-2(k+1),-1]^c}   H_{-(k+1) } \nonumber \\
 & -\left( \widetilde{\Tr}_{[1,2(k+1)]^c} \left( \mathds{1} -  \widetilde{\Tr}_{ [1,2k]^c \setminus [1,2(k+1)]^c } \right) \sum_{r=b}^{k} H_r \right)  -  \widetilde{\Tr}_{[1,2(k+1)]^c}   H_{k+1} \| \label{108}
\end{align}
Combining together in \eqref{108} the partial traces on $ H_{k+1} $ and $ H_{-(k+1)} $, using the same procedure as for \eqref{101}, and employing \eqref{103}, we obtain that \eqref{108} is upper bounded by $ O\left( J C_\xi f\left(\frac{k}{\xi}\right)\right) $.

More precisely from \eqref{108} we have five terms to keep into account. Term I:
\begin{align}
 & \| \left( \widetilde{\Tr}_{[-2(k+1),2(k+1)]^c} \left( \mathds{1} -  \widetilde{\Tr}_{ [-2k,2k]^c \setminus [-2(k+1),2(k+1)]^c } \right) \sum_{r=-k}^{k} H_r \right) \| \le \| \left( \mathds{1} -  \widetilde{\Tr}_{ [-2k-2,-2k-1] \cup [2k+1,2k+2] } \right) \sum_{r=-k}^{k} H_r  \| \nonumber \\
& \le \| \left( \mathds{1} -  \widetilde{\Tr}_{ [-2k-2,-2k-1] \cup [2k+1,2k+2] } \right) \sum_{r=0}^{k} H_r  \| + \| \left( \mathds{1} -  \widetilde{\Tr}_{ [-2k-2,-2k-1] \cup [2k+1,2k+2] } \right) \sum_{r=-k}^{0} H_r  \| \\
& \le 2 J \sum_{r=0}^k f\left(\frac{2k+1-r}{\xi} \right) \le 2 J C_\xi \left( f\left(\frac{k}{\xi}\right) - f\left(\frac{2k}{\xi}\right) \right)
\end{align}

Term II:
\begin{align}
 & \| \left( \widetilde{\Tr}_{[-2(k+1),-1)]^c} \left( \mathds{1} -  \widetilde{\Tr}_{ [-2k,-1]^c \setminus [-2(k+1),-1]^c } \right) \sum_{r=-k}^{-b} H_r \right) \| \le \| \left( \mathds{1} -  \widetilde{\Tr}_{ [-2k-2,-2k-1]  } \right) \sum_{r=-k}^{-b} H_r  \| \nonumber \\
& \le  J \sum_{r=b}^{k} f\left(\frac{2k+1-r}{\xi} \right) \le  J C_\xi \left( f\left(\frac{k}{\xi}\right) - f\left(\frac{2k-b}{\xi}\right) \right)
\end{align}

Term III:
\begin{align}
 & \| \left( \widetilde{\Tr}_{[1,2(k+1))]^c} \left( \mathds{1} -  \widetilde{\Tr}_{ [1,2k]^c \setminus [1,2(k+1)]^c } \right) \sum_{r=b}^{k} H_r \right) \| \le   J C_\xi \left( f\left(\frac{k}{\xi}\right) - f\left(\frac{2k-b}{\xi}\right) \right)
\end{align}

Term IV:
\begin{align}
 & \|  \widetilde{\Tr}_{[-2(k+1),2(k+1)]^c} H_{k+1} -  \widetilde{\Tr}_{ [1,2(k+1)]^c } H_{k+1} \| = \|  \widetilde{\Tr}_{[-2(k+1),2(k+1)]^c} \left( \mathds{1} -  \widetilde{\Tr}_{ [1,2(k+1)]^c \setminus [-2(k+1),2(k+1)]^c } \right)  H_{k+1} \| \nonumber \\
 & \le \| \left( \mathds{1} -  \widetilde{\Tr}_{ [-2(k+1),0]} \right) H_{k+1}  \| \le J f\left(\frac{k+1}{\xi}\right)  
\end{align}

Term V:
\begin{align}
 & \|  \widetilde{\Tr}_{[-2(k+1),2(k+1)]^c} H_{-(k+1)} -  \widetilde{\Tr}_{ [-2(k+1),-1]^c } H_{-(k+1)} \|  \le J f\left(\frac{k+1}{\xi}\right)  
\end{align}
Overall: 
\begin{align}
\textrm{Term I} + ... + \textrm{Term V}  \le 4 J C_\xi \left( f\left(\frac{k}{\xi}\right) - f\left(\frac{2k}{\xi}\right) \right) + 2 J f\left(\frac{k+1}{\xi}\right)  \label{second_term}
\end{align}

So far with \eqref{second_term} we have upper bounded the second term in equation \eqref{99}, as mentioned above the first term is in relation with the object that in appendix \ref{L-R_bounds} is called $ \Delta_k(t) $. In fact, from the first term in \eqref{99}:
\begin{align}
 & \| e^{is \left( \widehat{H}_{[-k,-b]} + \widehat{H}_{[b,k]} \right)} \left( \widehat{H}_{[-k,k]} - \widehat{H}_{[-k,-b]} - \widehat{H}_{[b,k]} \right) e^{-is \left( \widehat{H}_{[-k,-b]} + \widehat{H}_{[b,k]} \right)} +  \nonumber \\
  & - e^{is \left( \widehat{H}_{[-k-1,-b]} + \widehat{H}_{[b,k+1]} \right)} \left( \widehat{H}_{[-k,k]} - \widehat{H}_{[-k,-b]} - \widehat{H}_{[b,k]} \right) e^{-is \left( \widehat{H}_{[-k-1,-b]} + \widehat{H}_{[b,k+1]} \right)} \|   \\
  =\| \int_0^s du & \frac{d}{du} \Big[  e^{i(s-u) \left( \widehat{H}_{[-k-1,-b]} + \widehat{H}_{[b,k+1]} \right)} 
 e^{iu \left( \widehat{H}_{[-k,-b]} + \widehat{H}_{[b,k]} \right)} 
 \left( \widehat{H}_{[-k,k]} - \widehat{H}_{[-k,-b]} - \widehat{H}_{[b,k]} \right) \nonumber  \\
 & e^{-iu \left( \widehat{H}_{[-k,-b]} + \widehat{H}_{[b,k]} \right) } 
  e^{-i(s-u) \left( \widehat{H}_{[-k-1,-b]} + \widehat{H}_{[b,k+1]} \right)} \Big] \| \\
  \le \int_0^s du & \| \Big[ \widehat{H}_{[-k-1,-b]} + \widehat{H}_{[b,k+1]} - 
  \widehat{H}_{[-k,-b]} - \widehat{H}_{[b,k]} ,  \nonumber \\ 
  & e^{iu \left( \widehat{H}_{[-k,-b]} + \widehat{H}_{[b,k]} \right)}
  \left( \widehat{H}_{[-k,k]} - \widehat{H}_{[-k,-b]} - \widehat{H}_{[b,k]} \right)
  e^{-iu \left( \widehat{H}_{[-k,-b]} + \widehat{H}_{[b,k]} \right)} \Big] \| \label{111}
\end{align}
$ \widehat{H}_{[-k-1,-b]} + \widehat{H}_{[b,k+1]} - \widehat{H}_{[-k,-b]} - \widehat{H}_{[b,k]} $ is ``mostly'' supported around $ \pm (k+1) $ but with tails on $ [-2k,2k] $, whereas $ \widehat{H}_{[-k,k]} - \widehat{H}_{[-k,-b]} - \widehat{H}_{[b,k]} $ is ``mostly'' supported around $ [-b,b] $ with tails on $ [-2k,2k] $.
The commutator in \eqref{111} has the structure of a Lieb-Robinson bound, once we have shown that these two operators  can be approximated with operators that have supports disjoint and enough far apart to ensure a small upper bound. We now perform such approximations.

Analogously to what done in \eqref{G16} we have that:
\begin{align} 
 & \| \widehat{H}_{[-k-1,-b]} + \widehat{H}_{[b,k+1]} -   \widehat{H}_{[-k,-b]} - \widehat{H}_{[b,k]} \| \le 2J C_\xi \left(f\left(\frac{k}{\xi}\right)-f\left(\frac{2k-b}{\xi}\right) \right) +  \| \widehat{H}_{k+1} + \widehat{H}_{-(k+1)} \| \\
 & \le 2J C_\xi \left(f\left(\frac{k}{\xi}\right)-f\left(\frac{2k-b}{\xi}\right) \right) +  2J \le O(2J) \label{115}
\end{align}
\eqref{115} bounds the norm of the first entry in the commutator of \eqref{111}.


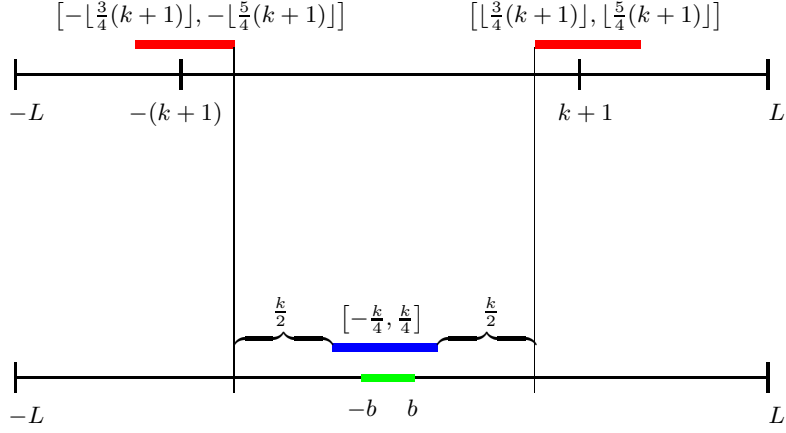
\begin{figure}[h]
\setlength{\unitlength}{1mm} 
\begin{picture}(180,0)(-50,50)

\thicklines

\put(-15,16){\line(0,1){4}}
\put(85,16){\line(0,1){4}}
\put(7,16){\line(0,1){4}}
\put(60,16){\line(0,1){4}}
\put(-15,18){\line(1,0){100}}

\linethickness{1mm}

{\color{red} \put(1,22){\line(1,0){13}}}
{\color{red} \put(53,22){\line(1,0){14}}}

\thinlines

\put(-17,12){$ -L $}
\put(84,12){$ L $}
\put(-1,12){$ -(k+1) $}
\put(56,12){$ k+1 $}
\put(44,25){$ \left[ \lfloor \frac{3}{4}(k+1)\rfloor,\lfloor \frac{5}{4}(k+1)\rfloor \right] $}
\put(-11,25){$ \left[- \lfloor \frac{3}{4}(k+1)\rfloor,-\lfloor \frac{5}{4}(k+1)\rfloor \right] $}

\end{picture}

\begin{picture}(180,90)(-50,0)

\thinlines
\linethickness{0.1mm}

\put(14,16){\line(0,1){46}}
\put(54,16){\line(0,1){46}}

\thicklines

\put(-15,16){\line(0,1){4}}
\put(85,16){\line(0,1){4}}
\put(-15,18){\line(1,0){100}}

\linethickness{1mm}

{\color{green} \put(31,18){\line(1,0){7}}}

{\color{blue} \put(26,22){\line(1,0){14}}}

\thinlines

\put(-17,12){$ -L $}
\put(84,12){$ L $}

\put(27,25){$ \left[ -\frac{k}{4},\frac{k}{4} \right] $}

\put(28,13){$ -b $}
\put(36,13){$ b $}

\put(40,22){$\overbrace{\hspace{13mm}}$}
\put(46,26){$ \frac{k}{2}$}

\put(13,22){$\overbrace{\hspace{13mm}}$}
\put(18,26){$ \frac{k}{2}$}

\end{picture}
\caption{In red the regions where the supports of the operators $ \widehat{H}_{-(k+1)} $ and $ \widehat{H}_{k+1} $ have been restricted. These operators with restricted support have been denoted $ \widehat{\widehat{H}}_{k+1} $ and $ \widehat{\widehat{H}}_{-(k+1)} $. In green the region of ``maximal'' support of the operator $ \widehat{H}_{[-k,k]} - \widehat{H}_{[-k,-b]} - \widehat{H}_{[b,k]} $, in blue the region $ \left[-\frac{k}{4},\frac{k}{4} \right] $ where the support of this operator has been restricted. This operator has been denoted $ \widehat{\widehat{H}}_{[-b,b]} $. The distance among the red and blue regions, that is the distance among the supports of $ \widehat{\widehat{H}}_{\pm (k+1)} $ and $ \widehat{\widehat{H}}_{[-b,b]} $ is approximately equal to $ \frac{k}{2} $. }
\label{fig_supp}
\end{figure}


To apply the theory of Lieb-Robinson bounds for local Hamiltonians with tails we need to ensure that the two operators that enter the commutator are far apart, since they are actually terms of the Hamiltonian, namely they have tails, we need to restrict their supports, provided the error in doing so is small. Let us consider $ \widehat{H}_{k+1} $ and $ \widehat{H}_{-(k+1)} $, that have supports $ [1,2(k+1)] $ and $ [-2(k+1),-1] $. We define:
\begin{align}
 &\widehat{\widehat{H}}_{k+1} := \widetilde{\Tr}_{[ \lfloor \frac{3(k+1)}{4} \rfloor, \lfloor \frac{5(k+1)}{4} \rfloor ]^c} H_{k+1} \\
 &\widehat{\widehat{H}}_{-(k+1)} := \widetilde{\Tr}_{[ \lfloor \frac{-5(k+1)}{4},\frac{-3(k+1)}{4} \rfloor ]^c} H_{-(k+1)}
\end{align}
$ \lfloor a \rfloor $ denotes the largest integer such that $ \lfloor a \rfloor \le a $.  With this choice the size of the support is approximately $ \frac{|k|}{2} $, see figure \ref{fig_supp}. It is easy to see that the error made is upper bounded by  $  J f\left(\frac{k}{4 \xi}\right) $, in fact:
\begin{align}
 & \widehat{H}_{k+1} -  \widetilde{\Tr}_{ [ \lfloor \frac{3(k+1)}{4} \rfloor, \lfloor \frac{5(k+1)}{4} \rfloor ]^c }  H_{k+1} = \left( \widetilde{\Tr}_{[1,2(k+1)]^c} \left( \mathds{1} - \widetilde{\Tr}_{ [ \lfloor \frac{3(k+1)}{4} \rfloor, \lfloor \frac{5(k+1)}{4} \rfloor ]^c \setminus [1,2(k+1)]^c } \right)  H_{k+1} \right) 
\end{align}
is such that: $ [ \lfloor \frac{3(k+1)}{4} \rfloor, \lfloor \frac{5(k+1)}{4} \rfloor ]^c \setminus [1,2(k+1)]^c = [1,\lfloor \frac{3(k+1)}{4} \rfloor -1 ] \, \bigcup \, [ \lfloor \frac{5(k+1)}{4} \rfloor , 2(k+1) ] $, therefore the distance of $ k+1 $ from this set is approximately $ \lfloor \frac{k}{4} \rfloor $.

On the other hand:
\begin{align}
 &\widehat{H}_{[-k,k]} - \widehat{H}_{[-k,-b]} - \widehat{H}_{[b,k]} \\
 & =  \widetilde{\Tr}_{[-2k,2k]^c} \left(\sum_{r=-k}^{- \lfloor \frac{k}{4} \rfloor-1} H_r  + \sum_{r=-\lfloor \frac{k}{4} \rfloor}^{\lfloor \frac{k}{4} \rfloor} H_r   + \sum_{r=\lfloor \frac{k}{4} \rfloor+1}^{k} H_r  \right)   \nonumber \\
 &- \widetilde{\Tr}_{[-2k,-1]^c} \left( \sum_{r=-k}^{-\lfloor\frac{k}{4}\rfloor-1} H_r + \sum_{r=-\lfloor\frac{k}{4} \rfloor}^{-b} H_r  \right)   - \widetilde{\Tr}_{[1,2k]^c} \left( \sum_{r=b}^{\lfloor \frac{k}{4} \rfloor} H_r + \sum_{r=\lfloor \frac{k}{4} \rfloor+1}^{k} H_r \right)  \label{116}
\end{align}
We pair terms from \eqref{116} as follows:
\begin{align}
 & \| \left( \widetilde{\Tr}_{[-2k,2k]^c} - \widetilde{\Tr}_{[-2k,-1]^c}  \right) \sum_{r=-k}^{-\lfloor \frac{k}{4}\rfloor-1} H_r \| \le J C_\xi \left(  f \left( \frac{k}{4 \xi} \right) - f \left( \frac{k}{\xi} \right) \right) \label{119} \\
 & \| \left( \widetilde{\Tr}_{[-2k,2k]^c} - \widetilde{\Tr}_{[1,2k]^c}  \right) \sum_{r=\lfloor \frac{k}{4} \rfloor +1}^{k} H_r \| \le J C_\xi  \left( f \left( \frac{k}{4 \xi} \right) - f \left( \frac{k}{\xi} \right) \right) \label{120}
\end{align}
It is left $ \widetilde{\Tr}_{[-2k,2k]^c} \sum_{r=-\lfloor \frac{k}{4} \rfloor}^{\lfloor \frac{k}{4} \rfloor} H_r - \widetilde{\Tr}_{[-2k,-1]^c} \sum_{r=-\lfloor \frac{k}{4} \rfloor}^{-b} H_r - \widetilde{\Tr}_{[1,2k]^c} \sum_{r=b}^{\lfloor \frac{k}{4} \rfloor} H_r $. We replace this operator with the same one but with support restricted to $ [-\lfloor \frac{k}{2} \rfloor, \lfloor \frac{k}{2} \rfloor] $, we denote this operator as:
\begin{align}
 \widehat{\widehat{H}}_{[-b,b]} : = \widetilde{\Tr}_{[-\lfloor \frac{k}{2} \rfloor,\lfloor \frac{k}{2} \rfloor]^c} \sum_{r=- \lfloor \frac{k}{4} \rfloor}^{\lfloor \frac{k}{4} \rfloor} H_r - \widetilde{\Tr}_{[-\lfloor \frac{k}{2} \rfloor,-1]^c} \sum_{r=-\lfloor \frac{k}{4} \rfloor}^{-b} H_r - \widetilde{\Tr}_{[1,\lfloor \frac{k}{2} \rfloor]^c} \sum_{r=b}^{\lfloor \frac{k}{4} \rfloor} H_r
\end{align}
As shown in \eqref{221}, it follows that: 
\begin{align}
 & \| \widetilde{\Tr}_{[-2k,2k]^c} \sum_{r=-\lfloor \frac{k}{4} \rfloor}^{\lfloor \frac{k}{4} \rfloor} H_r - \widetilde{\Tr}_{[-2k,-1]^c} \sum_{r=- \lfloor \frac{k}{4} \rfloor}^{-b} H_r - \widetilde{\Tr}_{[1,2k]^c} \sum_{r=b}^{\lfloor \frac{k}{4} \rfloor} H_r -\widehat{\widehat{H}}_{[-b,b]} \| \label{221} \\
& =  \| \widetilde{\Tr}_{[-2k,2k]^c} \left( \mathds{1} - \widetilde{\Tr}_{[- \lfloor \frac{k}{2} \rfloor,\lfloor \frac{k}{2} \rfloor]^c \setminus [-2k,2k]^c } \right) \sum_{r=-\lfloor \frac{k}{4} \rfloor}^{\lfloor \frac{k}{4} \rfloor} H_r \nonumber \\
& - \widetilde{\Tr}_{[-2k,-1]^c} \left( \mathds{1} - \widetilde{\Tr}_{[-\lfloor \frac{k}{2} \rfloor,-1]^c \setminus [-2k,-1]^c } \right) \sum_{r=-\lfloor \frac{k}{4} \rfloor}^{-b} H_r \nonumber \\
&- \widetilde{\Tr}_{[1,2k]^c} \left( \mathds{1} - \widetilde{\Tr}_{[1,\lfloor \frac{k}{2} \rfloor]^c \setminus  [1,2k]^c } \right) \sum_{r=b}^{\lfloor \frac{k}{4} \rfloor} H_r \| \\
& \le J C_\xi \left( 4f \left( \frac{k}{4 \xi} \right) - 2f \left( \frac{k}{2 \xi} \right) - 2f \left( \frac{\frac{k}{2} - b }{\xi} \right) \right) \label{122}
\end{align}

To upper bound the norm of the commutator (Lieb-Robinson bound) in \eqref{111}, we also need to upper bound the norm of each entry of the commutator. The norm of the first entry is provided by \eqref{115}.

The norm of the second entry of the commutator in \eqref{111} is bounded as follows:
\begin{align}
 & \| \widehat{H}_{[-k,k]} - \widehat{H}_{[-k,-b]} - \widehat{H}_{[b,k]} \| := 
 \| \widetilde{\Tr}_{[-2k,2k]^c} \sum_{r=-k}^k H_r - \widetilde{\Tr}_{[-2k,-1]^c} \sum_{r=-k}^{-b} H_r
 - \widetilde{\Tr}_{[1,2k]^c} \sum_{r=b}^k H_r \| \\
 & \le | \widetilde{\Tr}_{[-2k,2k]^c} \left( \mathds{1} - \widetilde{\Tr}_{ [-2k,-1]^c \setminus [-2k,2k]^c } \right) \sum_{r=-k}^{-b} H_r \| + \| \widetilde{\Tr}_{[-2k,2k]^c} \left( \mathds{1} - \widetilde{\Tr}_{ [1,2k]^c \setminus [-2k,2k]^c } \right) \sum_{r=b}^{k} H_r \| +  \| \widehat{H}_{[-b+1,b-1]} \| \\
& \le |  \left( \mathds{1} - \widetilde{\Tr}_{ [-2k,-1]^c \setminus [-2k,2k]^c } \right) \sum_{r=-k}^{-b} H_r \| + \| \left( \mathds{1} - \widetilde{\Tr}_{ [1,2k]^c \setminus [-2k,2k]^c } \right) \sum_{r=b}^{k} H_r \| +  (2b-1)J \\ 
& \le 2J \sum_{r=b}^k f\left( \frac{r}{\xi} \right) +  (2b-1)J \le 2JC_\xi \left(f\left( \frac{b}{\xi} \right) - f\left( \frac{k}{\xi} \right) \right) +  (2b-1)J \le O(2bJ)  \label{128}
\end{align}

Overall the norm of the commutator in \eqref{111} is bounded as follows:
\begin{align}
 &\| \Big[ \widehat{H}_{[-k-1,-b]} + \widehat{H}_{[b,k+1]} - 
  \widehat{H}_{[-k,-b]} - \widehat{H}_{[b,k]} ,  e^{iu \left( \widehat{H}_{[-k,-b]} + \widehat{H}_{[b,k]} \right)}
  \left( \widehat{H}_{[-k,k]} - \widehat{H}_{[-k,-b]} - \widehat{H}_{[b,k]} \right)
  e^{-iu \left( \widehat{H}_{[-k,-b]} + \widehat{H}_{[b,k]} \right)} \Big] \| \nonumber \\
  & \le \| \Big[ \widehat{\widehat{H}}_{k+1} + \widehat{\widehat{H}}_{-(k+1)} ,  e^{iu \left( \widehat{H}_{[-k,-b]} + \widehat{H}_{[b,k]} \right)}
  \left( \widehat{H}_{[-k,k]} - \widehat{H}_{[-k,-b]} - \widehat{H}_{[b,k]} \right)
  e^{-iu \left( \widehat{H}_{[-k,-b]} + \widehat{H}_{[b,k]} \right)} \Big] \| + \nonumber \\ 
  & + 2 \left( 2J C_\xi \left(f\left(\frac{k}{\xi}\right)-f\left(\frac{2k-b}{\xi}\right) \right) + 2 J  f\left(\frac{k}{4\xi}\right) \right) \left( 2JC_\xi \left(f\left( \frac{b}{\xi} \right) - f\left( \frac{k}{\xi} \right) \right) +  (2b-1)J  \right)  \\
  & \le \| \Big[ \widehat{\widehat{H}}_{k+1} + \widehat{\widehat{H}}_{-(k+1)} ,  e^{iu \left( \widehat{H}_{[-k,-b]} + \widehat{H}_{[b,k]} \right)}   \widehat{\widehat{H}}_{[-b,b]} 
  e^{-iu \left( \widehat{H}_{[-k,-b]} + \widehat{H}_{[b,k]} \right)} \Big] \| + \label{norm_comm} \\
  & + 8\,O(bJ^2) \left( 2 C_\xi \left(f\left(\frac{k}{\xi}\right)-f\left(\frac{2k-b}{\xi}\right) \right) + 2  f\left(\frac{k}{4\xi}\right) \right)  + 4\,O(C_\xi J^2)   \left( 6f \left( \frac{k}{4 \xi} \right) - 4f \left( \frac{k}{2 \xi} \right) - 2f \left( \frac{\frac{k}{2} - b }{\xi} \right) \right)  \nonumber \\
  & \le O(bJ^2)(e^{\vlr t}-1) f\left( \frac{k}{2\xi} \right) + J^2 D_\xi \label{def_D}
\end{align}
In the last line, \eqref{def_D}, we have introduced $ D_\xi \le O \left(f\left( \frac{k}{4\xi} \right) \right) $ to shorten the notation. We see that in comparison to the Lieb-Robinson bound of strictly local operators there is a term, $ D_\xi $ that vanishes for large distances, of $ O(k) $, among the supports of $ \widehat{\widehat{H}}_{k+1} $ and $ \widehat{\widehat{H}}_{[-b,b]} $, as $ f\left( \frac{k}{2\xi} \right) $. We stress that when $ f $ has a compact support, as discussed before in \eqref{G16}, we recover the case of strictly local Hamiltonians and $ D_\xi $ is identically vanishing. 
The norm of the commutator in \eqref{norm_comm} is upper bounded by a Lieb-Robinson bound for Hamiltonians with rapidly decaying interactions, see for example equation S17 of \cite{Yin_2023}, we denote $ \vlr $ the corresponding velocity. For a one dimensional system the L-R bound in equation S17 of \cite{Yin_2023} is independent of the size of the supports of the operators.  We notice that the supports of the operators involved in the Lieb-Robinson bound are at a distance of the order of $ \frac{k}{2} $.

We are now ready to go back to \eqref{99}, that we copy below, to obtain the bound on the trace distance that we are looking for:
\begin{align}
 & \frac{1}{2}\|  {\widehat V}_{\Lambda_{k+1}}(t) \rho {\widehat V}^*_{\Lambda_{k+1}}(t) -  {\widehat V}_{\Lambda_{k}}(t) \rho {\widehat V}^*_{\Lambda_{k}}(t) \|_1  \\
 & \le  t \sup_{s\in[0,t]} \|  e^{is \left( \widehat{H}_{[-k,-b]} + \widehat{H}_{[b,k]} \right)} \left( \widehat{H}_{[-k,k]} - \widehat{H}_{[-k,-b]} - \widehat{H}_{[b,k]} \right) e^{-is \left( \widehat{H}_{[-k,-b]} + \widehat{H}_{[b,k]} \right)} +  \nonumber \\
  & - e^{is \left( \widehat{H}_{[-k-1,-b]} + \widehat{H}_{[b,k+1]} \right)} \left( \widehat{H}_{[-k,k]} - \widehat{H}_{[-k,-b]} - \widehat{H}_{[b,k]} \right) e^{-is \left( \widehat{H}_{[-k-1,-b]} + \widehat{H}_{[b,k+1]} \right)}  \| +  \nonumber \\
  & +  t \, \|  \widehat{H}_{[-k-1,k+1]} - \widehat{H}_{[-k-1,-b]} - \widehat{H}_{[b,k+1]}  -  \left( \widehat{H}_{[-k,k]} - \widehat{H}_{[-k,-b]} - \widehat{H}_{[b,k]} \right) \| \\
 &  \le  \, t \sup_{s\in [0,t]} \int_0^s du  \| \Big[ \widehat{H}_{[-k-1,-b]} + \widehat{H}_{[b,k+1]} - 
  \widehat{H}_{[-k,-b]} - \widehat{H}_{[b,k]} ,  \nonumber \\ 
  & e^{iu \left( \widehat{H}_{[-k,-b]} + \widehat{H}_{[b,k]} \right)}
  \left( \widehat{H}_{[-k,k]} - \widehat{H}_{[-k,-b]} - \widehat{H}_{[b,k]} \right)
  e^{-iu \left( \widehat{H}_{[-k,-b]} + \widehat{H}_{[b,k]} \right)} \Big] \| + \nonumber \\
  & + 4 \, t \,  J \, C_\xi \left( f\left(\frac{k}{\xi}\right) - f\left(\frac{2k}{\xi}\right) \right) + t \,J \, f\left(\frac{k+1}{\xi}\right)  \label{125} \\
  & \le  t^2 O(bJ^2) (e^{\vlr t}-1) f\left( \frac{k}{2\xi} \right) + t^2J^2 D_\xi   +  t \,  J \, D'_\xi \label{126} \\ &\le t \, J \, O \left(  g(b,J,t) \, f \left( \frac{k}{4 \xi} \right) \right) \label{127}
\end{align}
In \eqref{126} we have introduced 
\begin{align}
 D'_\xi:= 4  \, C_\xi \left( f\left(\frac{k}{\xi}\right) - f\left(\frac{2k}{\xi}\right) \right) + f\left(\frac{k+1}{\xi}\right) \le O \left(f\left( \frac{k}{\xi} \right) \right)
\end{align}
When $ f $ has a compact support, see equation \eqref{f_compact}, $ D'_\xi $, as $ D_\xi $, vanishes.
In \eqref{127}, it is: 
\begin{align} \label{g_def}
g(b,J,t) \le b \,t\, J \,\left( (e^{\vlr t}-1) + O(1) \right) + O(1)
\end{align}
The minimization with respect to $ b $ of the trace distance \eqref{127} is straightforward, in fact being $ g $ proportional to $ b $ in \eqref{g_def} up to corrections of $ O(1) $, the minimum is reached for the smallest $ b $ that makes the theory consistent. Since we have seen for the strictly local case, see \eqref{f_compact}, that $ b=\xi $, then we pick $ b=\xi $ also in the general case.

We are now ready to upper bound the sum in \eqref{bound_sum_tails}, that we copy below \eqref{bound_sum_tails_app}, to obtain $ \Delta S_\alpha(t) $ of equation \eqref{bound_Rényi_tails} in the case of an Hamiltonian with exponential decrease of interactions.
As done in the case of strictly local Hamiltonians we need to find out the value of $ l $ that minimizes the sum. This will turn out to be dependent on $ \alpha $, $ v_{LR} $ and $ \xi $. We adopt a different, but equivalent, approach than the one of appendix \ref{sum_details} where we had an approximate value of $ l $ from the solution of the minimization condition. Here we will pick up $ l=\beta t $ with a $ \beta $ to be determined in such a way to hold a linear increase in $ t $ for large $ t $ of $ \Delta S_\alpha(t) $, we know in fact that a ``slower'' increase has been already ruled out by \cite{Schuch_Wolf_2008}. 
With, for large enough $ t $, $ R_k(t) \le \xi (t J)^2 e^{\vlr t}e^{-\frac{k}{4\xi}} $ and 
\begin{align} \label{bound_sum_tails_app}
  \Delta S_\alpha (t) & \le \frac{1}{1-\alpha} \sum_{k=l}^{L-1} \log \left[ \left(1-R_k(t) \right)^\alpha + (2^{2(k+1)+1}-1)^{1-\alpha}R_k(t)^\alpha \right] + 2l +1 \\
&\le \frac{1}{1-\alpha} \sum_{k=l}^{\infty}  \left( -\alpha R_k(t)  + 2^{(2k +3)(1-\alpha)}R_k(t)^\alpha \right) + 2\beta t +1 \\
&\le \frac{1}{1-\alpha} \sum_{n=0}^{\infty}  \left( -\alpha R_{l+n}(t)  + 2^{(2(l+n) +3)(1-\alpha)}R_{l+n}(t)^\alpha \right) + 2\beta t +1 \\
&\le \frac{1}{1-\alpha} \sum_{n=0}^{\infty}  \left( -\alpha \xi (t J)^2 e^{\vlr t}e^{-\frac{\beta t+n}{4\xi}}  + e^{\ln 2(2(\beta t+n) +3)(1-\alpha)}\left(\xi (t J)^2 \right)^\alpha e^{\alpha \vlr t}e^{-\alpha\frac{\beta t+n}{4\xi}} \right) + 2\beta t +1 \\
&\le \frac{1}{1-\alpha} \sum_{n=0}^{\infty}  \left( -\alpha \xi (t J)^2 e^{\vlr t}e^{-\frac{\beta t+n}{4\xi}}  + e^{3\ln 2(1-\alpha)}\left(\xi (t J)^2 \right)^\alpha e^{\alpha \vlr t}e^{-\lambda(\beta t+n)} \right) + 2\beta t +1 \label{rest_sum}
\end{align} 
In the last step we have introduced $ \lambda:= \frac{\alpha}{4\xi} - 2\ln2(1-\alpha) $. We now carry out two important steps. The first is to ensure that the sum over $ n $ is finite, this gives a condition on the set of allowed $ \alpha$. The second step is to impose that the sum of the series decreases exponentially in time, this will provide a restriction on the value of $ \beta $ and therefore the large $ t $ behaviour of $ \Delta S_\alpha (t) $.
From the second sum in \eqref{rest_sum} we get a restriction on the value of $ \alpha $ as follows:
\begin{align} \label{alpha_restr}
 \lambda > 0 \hspace{5mm} \Rightarrow  \hspace{5mm} \frac{\alpha}{4 \xi} - 2\ln2(1-\alpha) > 0 \hspace{5mm} \Rightarrow  \hspace{5mm} \alpha > \frac{1}{1+\frac{1}{8\xi\ln2}}
\end{align}
With $ \xi \ll 1 $, that corresponds to the Hamiltonians' term becoming almost on site, \eqref{alpha_restr} reduces to $ \alpha \gtrsim 0 $. In the limit of extendend interactions instead $ \xi \gg 1 $, being by assumption $0 < \alpha \le 1$, \eqref{alpha_restr} gives $ \alpha \approx 1 $.

On the other hand the restriction on $ \beta $ to ensure that the sum in \eqref{rest_sum} decreases exponentially in $ t $ provides:
\begin{align} \label{beta_restr}
 \lambda \beta - \alpha \vlr > 0 \hspace{5mm} \Rightarrow  \hspace{5mm} \beta > \frac{\alpha \vlr}{\lambda} > \frac{v}{\frac{1}{4\xi}- (2\ln2)\frac{1-\alpha}{\alpha}}
\end{align}
It can be checked that the condition for the exponential decrease in time of the first term of the sum in \eqref{rest_sum} is less restrictive than \eqref{beta_restr}.

Overall, making use of the conditions \eqref{alpha_restr} and \eqref{beta_restr} we have that:
\begin{align} \label{final_bound_tails_app}
& \Delta S_\alpha (t)  \le \frac{1}{1-\alpha} \Big[ -\alpha \xi (t J)^2 e^{\vlr t}e^{-\frac{\beta t}{4\xi}} \frac{e^{\frac{1}{4\xi}}}{e^{\frac{1}{4\xi}}-1} +  e^{(3\ln 2)(1-\alpha)}\left(\xi (t J)^2 \right)^\alpha e^{\alpha \vlr t}e^{-\lambda \beta t} \frac{e^\lambda}{e^\lambda -1} \Big] +1  + 2\beta t 
\end{align}
Considering the lower bound on $ \beta $ \eqref{beta_restr}, the best upper bound on $ \Delta S_\alpha (t) $ reads
\begin{align} \label{final_rate_tails}
& \lim_{t\rightarrow \infty} \lim_{L\rightarrow \infty}  \frac{\Delta S_\alpha (t)}{t} 
\le  \frac{c'\vlr}{\frac{1}{4\xi}- (2\ln2)\frac{1-\alpha}{\alpha}} 
\end{align}
with $ c' > 2 $.

The limit $ \alpha \rightarrow 1 $ gives the von Neumann entropy. The factor proportional to $ \frac{1}{1-\alpha} $ in \eqref{final_bound_tails_app} goes to zero  with $ \alpha \rightarrow 1 $, therefore being differentiable in $ \alpha $, the limit $ \alpha \rightarrow 1 $, equals to minus the derivative of such factor in $ \alpha=1 $. It is immediate to realize that such derivative is exponentially decreasing in $ t $, then it follows:
\begin{align} \label{G66}
  \lim_{t\rightarrow \infty} \lim_{L\rightarrow \infty} \frac{\Delta S_1 (t)}{t} \le 4c'\vlr \xi
\end{align}
This shows that the limits $ \alpha \rightarrow 1 $ and the double limit $ \lim_{t\rightarrow \infty} \lim_{L\rightarrow \infty} $ of the rate $\frac{\Delta S_1 (t)}{t} $ can be exchanged leading to the same upper bound in \eqref{G66}, that in fact coincides with the limit $ \alpha \rightarrow 1 $ of the rhs of \eqref{final_rate_tails}.


\section{An upper bound on the concavity of the $\alpha$-Rényi entropies, $ 0 < \alpha < 1 $}  \label{Up_Renyi_Con}

\begin{Lemma} \label{Renyi_up}
Given a convex combination of states $ \rho := \sum_{i=1}^N p_i \rho_i $, with $ \rho_i: \mathds{C}^d \rightarrow \mathds{C}^d $, denoting $ \rho_i= \sum_{j=1}^{n_i} \lambda_j^i |e_j^i \rangle \langle e_j^i | $ the spectral  decomposition of each $ \rho_i $, for $ 0 < \alpha < 1 $, it holds:
\begin{align} \label{Rényi_upper_conc}
 \sum_{i=1}^N p_i S_\alpha (\rho_i) \le S_\alpha (\rho) \le H_\alpha \{p_i \lambda_j^i \} 
\end{align}
where $ H_\alpha \{p_i \lambda_j^i \} := \frac{1}{1-\alpha}  \log \sum_{i=1}^N \sum_{j=1}^{n_i} ( p_i \lambda_j^i)^\alpha $ is the $ \alpha$-Rényi entropy of the probability distribution $ \{p_i \lambda_j^i \} $. Moreover:
 \begin{equation} \label{up_con_sup}
  H_\alpha \{p_i \lambda_j^i \} \le H_\alpha \{p_i\} + \max_i \{ S_\alpha(\rho_i) \} 
\end{equation}
Therefore if $ \{ \rho_i \} $ are pure states, it is $ S_\alpha (\rho) \le H_\alpha \{p_i \} $, with the upper bound depending only on the probability distribution $ \{ p_i \} $.
\end{Lemma}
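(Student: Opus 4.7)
The lemma packages three inequalities: the concavity bound $\sum_i p_i S_\alpha(\rho_i) \le S_\alpha(\rho)$, the upper bound $S_\alpha(\rho) \le H_\alpha\{p_i\lambda_j^i\}$, and the splitting \eqref{up_con_sup}. My plan is to handle the first two by working at the level of $\Tr\rho^\alpha$ (for $0<\alpha<1$) and only applying the $\tfrac{1}{1-\alpha}\log(\cdot)$ wrapping at the end, and to obtain the third by a direct one-line manipulation. The main subtlety throughout is tracking signs and inequality directions through $\log$ and through division by $1-\alpha>0$; both are monotonicity-preserving, so this is routine provided one is careful.

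For the lower bound I would invoke operator concavity of $t\mapsto t^\alpha$ on $[0,\infty)$ for $\alpha\in(0,1)$, which transfers to concavity of $\rho \mapsto \Tr\rho^\alpha$ on positive matrices. Hence $\Tr\rho^\alpha \ge \sum_i p_i\Tr\rho_i^\alpha$. Applying monotonicity of $\log$ and then Jensen's inequality (concavity of $\log$) gives $\log\Tr\rho^\alpha \ge \sum_i p_i\log\Tr\rho_i^\alpha$, and dividing by the positive factor $1-\alpha$ yields $S_\alpha(\rho) \ge \sum_i p_i S_\alpha(\rho_i)$.

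For the upper bound the plan is to prove the equivalent trace statement $\Tr\rho^\alpha \le \sum_{i,j}(p_i\lambda_j^i)^\alpha$. Following the paper's majorization framing, I would write $\rho = \sum_{i,j} p_i\lambda_j^i\,|e_j^i\rangle\langle e_j^i|$ with unit (generally non-orthogonal) vectors $|e_j^i\rangle$, and set $|v_{ij}\rangle := \sqrt{p_i\lambda_j^i}\,|e_j^i\rangle$. Then $\rho = VV^*$ where $V$ is the matrix with columns $|v_{ij}\rangle$, so $V^*V$ shares the non-zero spectrum with $\rho$ and has diagonal entries $p_i\lambda_j^i$. Schur--Horn then gives $\{p_i\lambda_j^i\} \prec \lambda(\rho)$ (after zero-padding), and since the $\alpha$-Rényi entropy is Schur-concave for $\alpha>0$, the pooled (more spread out) distribution carries the larger entropy, which is exactly $H_\alpha\{p_i\lambda_j^i\} \ge S_\alpha(\rho)$. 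A shorter alternative, if one prefers to avoid the detour through Schur--Horn, is to quote Rotfel'd's trace subadditivity $\Tr(A+B)^\alpha \le \Tr A^\alpha + \Tr B^\alpha$ for $A,B\ge 0$ and $\alpha\in(0,1]$ and apply it inductively to the convex sum.

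The splitting \eqref{up_con_sup} is pure algebra. Rewriting $\Tr\rho_i^\alpha = 2^{(1-\alpha)S_\alpha(\rho_i)}$ and pulling the maximum out of the sum,
\begin{align*}
H_\alpha\{p_i\lambda_j^i\} &= \frac{1}{1-\alpha}\log\sum_i p_i^\alpha\, 2^{(1-\alpha)S_\alpha(\rho_i)} \\
&\le \frac{1}{1-\alpha}\log\!\Big[2^{(1-\alpha)\max_i S_\alpha(\rho_i)}\sum_i p_i^\alpha\Big] = \max_i S_\alpha(\rho_i) + H_\alpha\{p_i\}.
\end{align*}
The pure-state corollary follows by $S_\alpha(\rho_i)=0$, which makes the $\max$ term drop. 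I do not foresee any serious obstacle beyond packaging; the only care needed is ensuring that the Schur--Horn (or Rotfel'd) ingredient and the Schur-concavity of $S_\alpha$ for $\alpha\in(0,1)$ are applied with the correct direction of monotonicity through the $\tfrac{1}{1-\alpha}\log(\cdot)$ composition.
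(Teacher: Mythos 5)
Your proposal is correct and follows essentially the same route as the paper: the key step for the upper bound is the majorization relation $\{p_i\lambda_j^i\}\prec\mathrm{spec}(\rho)$ combined with Schur-concavity of $H_\alpha$, which you obtain via Schur--Horn applied to the Gram matrix $V^*V$ while the paper constructs the doubly stochastic matrix $|U^i_{j,k}|^2$ explicitly --- two standard packagings of the same fact; your algebra for \eqref{up_con_sup} is identical to the paper's. The only added content is your explicit Jensen argument for the left inequality (which the paper merely cites) and the Rotfel'd-type subadditivity $\Tr(A+B)^\alpha\le\Tr A^\alpha+\Tr B^\alpha$ as an alternative to majorization, both of which are valid.
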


\begin{proof}
We now sketch a proof, for more details consult lemma 1.24 (page 18) and example 11.12 (page 178) of \cite{Petz_Matrix_Analysis}.
The first inequality in \eqref{Rényi_upper_conc} is the concavity of the Rényi entropy with $ 0 < \alpha <1 $, \cite{Rastegin_2011,Vershynina_Scholarpedia}.
 The proof of the second inequality in \eqref{Rényi_upper_conc} is a simple application of the theory of majorization, see chapter II of \cite{Bhatia}. The spectral decomposition of $ \rho $ reads: $ \rho = \sum_{k=1}^m a_k |v_k \rangle \langle v_k | $. Since $ \{|v_k \rangle \} $ is an orthonormal set, it follows that $ m \le d $. The set $ \{ |e_j^i \rangle \} $, that collects the eigenvectors of all $ \{ \rho_i \} $, is in general not orthonormal. In fact, for example, two states $ \rho_m $ and $ \rho_n $  can share an eigenvector.  It follows that, completing the matrix $ V $ with entries $ V_{j,k}^i := \sqrt{\frac{p_i \lambda_j^i}{a_k}} \langle v_k |e_j^i \rangle $ to a unitary matrix $ U $, it holds
 \begin{equation}
  |e_j^i \rangle = \sum_k U_{j,k}^i |v_k \rangle 
 \end{equation}
It follows from the definition of $ V $, and the unitarity of $ U $, that 
\begin{equation}
 p_i \lambda_j^i = \sum_k |U_{j,k}^i|^2 a_k
\end{equation}
The matrix with entries $ |U_{j,k}^i|^2 $ is a double stochastic matrix, meaning that $ \sum_k |U_{j,k}^i|^2 = \sum_{i,j} |U_{j,k}^i|^2 = 1 $, this implies, see theorem II.1.10 (page 33) of \cite{Bhatia},   that the set $ \{p_i \lambda_j^i \} $ is majorized by the set  $ \{ a_k \} $. Each set is supposed to be ordered in a non increasing fashion. It then follows from theorem II.3.1 (page 40) of \cite{Bhatia}, and the concavity of the function $ x^\alpha $, with $0< \alpha <1 $, that:
\begin{equation} \label{x_concave} 
 \sum_k a_k^\alpha \le \sum_{i,j} (p_i \lambda_j^i)^\alpha 
\end{equation}
That implies, being the logarithm an increasing function: $ S_\alpha (\rho) := H_\alpha \{a_k \} \le H_\alpha \{p_i \lambda_j^i \}$.

We now prove equation \eqref{up_con_sup}.
 \begin{align}
 & H_\alpha \{p_i \lambda_j^i \} := \frac{1}{1-\alpha}  \log \sum_{i,j} ( p_i \lambda_j^i)^\alpha \nonumber \\ & =
 \frac{1}{1-\alpha}  \log \sum_{i=1}^N  p_i^\alpha \sum_j (\lambda_j^i)^\alpha \\ 
 &= \frac{1}{1-\alpha}  \log \sum_{i=1}^N  p_i^\alpha \exp \left( (1-\alpha) \frac{1}{1-\alpha} \log \sum_j (\lambda_j^i)^\alpha \right) \nonumber \\
 & = \frac{1}{1-\alpha}  \log \sum_{i=1}^N  p_i^\alpha \exp \left( (1-\alpha) S_\alpha (\rho_i) \right) \\
 & \le \frac{1}{1-\alpha}  \log \sum_{i=1}^N   p_i^\alpha \exp \left( (1-\alpha) \max_i \{ S_\alpha(\rho_i) \}  \right) \nonumber \\ & =  H_\alpha \{p_i\} + \max_i \{ S_\alpha(\rho_i) \}
\end{align}
\end{proof}

Equation \eqref{Rényi_upper_conc} holds also for the von Neumann entropy, with the Rényi entropy $ H_\alpha $ replaced by the Shannon entropy $ H $. In fact the same proof applies with the function $ x^\alpha $ replaced by $ -x \log x $ in equation \eqref{x_concave}. See equation 2.3 of \cite{Wehrl_Review}, and also theorem 11.10 (page 518) of \cite{Nielsen_Chuang}, or theorem 3.7 (page 35) of \cite{Petz_Quantum_Information_book} for alternative proofs.  
For the von Neumann entropy we can see that the upper bound takes a more explicit form:
 \begin{equation} \label{NC}
  S(\rho) \le H\{p_i\} + \sum_i p_i S(\rho_i)  
 \end{equation}
In fact:
\begin{align}
  H \{p_i \lambda_j^i \}&:= -\sum_{i,j} p_i \lambda_j^i \log ( p_i \lambda_j^i ) \nonumber \\ &= -\sum_{i,j} p_i \lambda_j^i \left( \log p_i + \log \lambda_j^i \right) \\
 & = -\sum_{i} p_i \log p_i  -\sum_{i,j} p_i \lambda_j^i \log \lambda_j^i \nonumber \\& = H\{p_i\} + \sum_i p_i S(\rho_i) 
\end{align}
This implies that the difference $ S(\rho) - \sum_i p_i S(\rho_i) \le  H\{p_i\} $ is upper bounded by a quantity only depending on the probability distribution $ \{ p_i \} $, but not on other quantities like the states $ \{ \rho_i \} $ or the dimension of the Hilbert space. The bound \eqref{NC} has been improved in theorem 14 of \cite{Audenaert_2014}, where it also appears an upper bound on the difference of Holevo quantities for different ensembles that is independent from the Hilbert space dimension.

In general this is not the case for the Rényi entropy. Let us consider the state $ \rho $ of equation \eqref{saturating_states} and the maximally mixed state of $ \mathds{C}^d $, $ \frac{1}{d} \mathds{1}_d $. It is easy to see that $ S_\alpha \left( p \rho + (1-p) \frac{1}{d} \mathds{1}_d \right) \approx  \log d $, with fixed $ \alpha $ and $ p $, both not too close to $1$, and $ d \gg 1 $. Then: $ S_\alpha \left( p \rho + (1-p) \frac{1}{d} \mathds{1}_d \right) -(1-p) S_\alpha \left( \frac{1}{d} \mathds{1}_d \right) \approx p \log d $, depending on the size of the Hilbert space.

Within the setting of the physical system described in section \ref{sec_Physical_Setting}, and the notations of Lemma \ref{Renyi_up}, let us consider the case where $ \max \{ S_\alpha(\rho_i) \} \le O( \log L ) $, and $ N \le O(L) $, then according to equation \eqref{up_con_sup}, we have:
\begin{align} \label{convex_comb}
 S_\alpha(\rho) \le \frac{1}{1-\alpha} \log \sum_{i=1}^N p_i^\alpha + O(\log L ) \le O(\log L)
\end{align}
This shows that when convex combinations of few, $ O(L) $, low entangled states, $ S_\alpha(\rho_i) \le O(\log L) $, are considered then $ \rho = \sum_i p_i \rho_i $ is still low entangled: $ S_\alpha(\rho) \le O(\log L) $.

\section{Proof of equation \eqref{Renyi-logneg}: the $\frac{1}{2}$-Rényi entropy (of the partial trace) equals the logarithmic negativity for pure states} \label{negativity_renyi}

In this short section we reproduce the argument in proposition 8 of \cite{Vidal_Werner}, see also the appendix in \cite{Nach_2013}, leading to the proof that the $\frac{1}{2}$-Rényi entropy equals the logarithmic negativity for pure states.

Given a state $ \rho $ defined on the tensor product of finite dimensional Hilbert spaces $ \mathcal{H}_1 \otimes \mathcal{H}_2 $, the partial transpose with respect to the first space $ \rho^{T_1} $ has matrix elements such that: 
\begin{align}
 \langle e_j^1 \otimes e_k^2, \rho^{T_1} (e_l^1 \otimes e_m^2) \rangle = \langle e_l^1 \otimes e_k^2, \rho (e_j^1 \otimes e_m^2) \rangle
\end{align}
The characterization of separable states by the properties of their partial transpose started with the Peres criterium \cite{Peres_1996}, stating that a necessary condition for separability is that all the eigenvalues of the partial transpose are negative. This is also sufficient, as established in \cite{Horo_2001}, see their Remark 2.
Vidal and Werner defined the logarithmic negativity in \cite{Vidal_Werner} as:
\begin{align}
 E_\mathcal{N}(\rho) := \log_2 \| \rho^{T_1} \|_1
\end{align}
Following proposition 8 of \cite{Vidal_Werner}, let us show that for a pure state $ \rho = | \Phi \rangle \langle \Phi | $, and the Schmidt decomposition of $ | \Phi \rangle = \sum_\alpha c_\alpha | e_\alpha^1 \otimes e_\alpha^2 \rangle$, with $ c_\alpha > 0 $, it holds:
\begin{equation} \label{Renyi-logneg_app}
 S_{1/2}(\Tr_{\mathcal{H}_1} \rho) := 2 \log \sum_\alpha c_\alpha = \log_2 \| \rho^{T_1} \|_1
\end{equation}

It is well known that the common non zero eigenvalues of $ \Tr_{\mathcal{H}_1} \rho $ and $ \Tr_{\mathcal{H}_2} \rho $ is the set $ \{ c_\alpha^2 \} $. 
\begin{align}
 ( | \Phi \rangle \langle \Phi | )^{T_1} = \sum_{\alpha,\beta} c_\alpha c_\beta ( | e_\alpha^1 \otimes e_\alpha^2 \rangle \langle e_\beta^1 \otimes e_\beta^2 | )^{T_1}
\end{align}
It is: 
\begin{align}
( | e_\alpha^1 \otimes e_\alpha^2 \rangle \langle e_\beta^1 \otimes e_\beta^2 | )^{T_1} =  | e_\beta^1 \otimes e_\alpha^2 \rangle \langle e_\alpha^1 \otimes e_\beta^2 | 
\end{align}
in fact
\begin{align}
 \langle e_\gamma^1 \otimes e_\delta^2 | e_\beta^1 \otimes e_\alpha^2 \rangle \langle e_\alpha^1 \otimes e_\beta^2 | e_\lambda^1 \otimes e_\eta^2 \rangle = \delta_{\gamma,\beta} \delta_{\delta,\alpha} \delta_{\alpha,\lambda} \delta_{\beta,\eta}  
\end{align}
on the other hand
\begin{align}
 \langle e_\lambda^1 \otimes e_\delta^2 | e_\alpha^1 \otimes e_\alpha^2 \rangle \langle e_\beta^1 \otimes e_\beta^2 | e_\gamma^1 \otimes e_\eta^2 \rangle = \delta_{\lambda,\alpha} \delta_{\delta,\alpha} \delta_{\beta,\gamma} \delta_{\beta,\eta}  
\end{align}
Then:
\begin{align} \label{partial_t}
 ( | \Phi \rangle \langle \Phi | )^{T_1} = \sum_{\alpha,\beta} c_\alpha c_\beta | e_\beta^1 \otimes e_\alpha^2 \rangle \langle e_\alpha^1 \otimes e_\beta^2 | 
\end{align}
Let us define the operator $ F : \mathcal{H}_1 \otimes \mathcal{H}_2 \rightarrow \mathcal{H}_1 \otimes \mathcal{H}_2 $, such that $ F(e_\alpha^1 \otimes e_\beta^2) = (e_\beta^1 \otimes e_\alpha^2) $. It is immediate to verify that $ F $ is unitary, in fact $ F^{-1} $ coincides with $ F $, while the adjoint, $ F^* $, of $ F $ is such that given any two vectors $ \psi $ and $ \phi $ in  $  \mathcal{H}_1 \otimes \mathcal{H}_2 $
\begin{align}
 & \psi = \sum_{\alpha, \beta} c_\alpha c_\beta | e_\alpha^1 \otimes e_\beta^2 \rangle,  \hspace{5mm} \phi = \sum_{\gamma, \delta} d_\gamma d_\delta | e_\gamma^1 \otimes e_\delta^2 \rangle 
\end{align}
\begin{align}
\langle \phi | F^* | \psi \rangle & := \overline{ \langle \psi | F | \phi \rangle } \\ 
& = \sum_{\alpha, \beta, \gamma, \delta} c_\alpha c_\beta \overline{d_\gamma d_\delta} \langle e_\alpha^1 \otimes e_\beta^2 | F | e_\gamma^1 \otimes e_\delta^2 \rangle \\
& = \left( \sum_\alpha c_\alpha \bar{d_\alpha} \right)^2  \label{adjoint_elements}
\end{align} 
It can be easily checked that $ \langle \phi | F | \psi \rangle $ coincides with \eqref{adjoint_elements}. 
Defining
\begin{align}
 C_1 := \sum_\alpha c_\alpha | e_\alpha^1 \rangle \langle e_\alpha^1 |, \hspace{1cm} C_2 := \sum_\beta c_\beta | e_\beta^2 \rangle \langle e_\beta^2 |
\end{align}
$ F $ as defined above is extended $ F : \mathcal{B}(\mathcal{H}_1 \otimes \mathcal{H}_2) \rightarrow \mathcal{B}(\mathcal{H}_1 \otimes \mathcal{H}_2) $, in such a way it acts like the identity on ``bras''. This implies that $ F(C_1 \otimes C_2) $ equals the RHS of \eqref{partial_t}:
\begin{align}
 F(C_1 \otimes C_2) := \sum_{\alpha,\beta} c_\alpha c_\beta | e_\beta^1 \otimes e_\alpha^2 \rangle \langle e_\alpha^1 \otimes e_\beta^2 |
\end{align}
This extended $ F $ is unitary as well. This follows from the proof above or it can be checked directly over the space of matrices with complex entries   $ \mathcal{B}(\mathcal{H}_1 \otimes \mathcal{H}_2) $, where the adjoint $ F^* $ is this time defined, with $ A, B \in \mathcal{B}(\mathcal{H}_1 \otimes \mathcal{H}_2) $, as:
\begin{align}
 \langle A, F^* (B) \rangle = \overline{\langle B, F (A) \rangle} = \overline{\Tr( B^* F (A))}
\end{align}
Finally we have that:
\begin{align}
 & \| ( | \Phi \rangle \langle \Phi | )^{T_1} \|_1 = \| F(C_1 \otimes C_2) \|_1 = \| C_1 \otimes C_2 \|_1 \\
 & = \| C_1 \|_1 \| C_2 \|_1 = \left( \sum_\alpha c_\alpha \right)^2  
\end{align}
that proves equation \eqref{Renyi-logneg_app}.



%



%

%


\bibliography{bibliography_Renyi}

\end{document}